\DeclareMathOperator*{\argmin}{argmin}
\begin{document}
\theoremstyle{plain}
\newtheorem{thm}{Theorem}
\newtheorem{remark}{Remark}
\newtheorem{lemma}{Lemma}
\newtheorem{prop}{Proposition}
\newtheorem*{cor}{Corollary}
\theoremstyle{definition}
\newtheorem{defn}{Definition}
\newtheorem{condi}{Condition}
\newtheorem{assump}{Assumption}

\title{Potential Game-Based Decision-Making  for  Autonomous Driving}
\author{Mushuang Liu\IEEEauthorrefmark{1},~\IEEEmembership{Member,~IEEE,} ~Ilya Kolmanovsky\IEEEauthorrefmark{2},~\IEEEmembership{Fellow,~IEEE,} ~H. Eric Tseng\IEEEauthorrefmark{3}, ~Suzhou Huang\IEEEauthorrefmark{3}, ~Dimitar Filev\IEEEauthorrefmark{3},~\IEEEmembership{Fellow,~IEEE,} and Anouck Girard\IEEEauthorrefmark{2},~\IEEEmembership{Senior Member,~IEEE}

\thanks{\IEEEauthorrefmark{1} M. Liu is with the Department of Mechanical and Aerospace Engineering, University of Missouri, Columbia, MO, USA (email: ml529@missouri.edu).
}
\thanks{\IEEEauthorrefmark{2} I. Kolmanovsky and A. Girard,  are with the Department of  Aerospace Engineering, University of Michigan, Ann Arbor, MI, USA (email: ilya@umich.edu and anouck@umich.edu).
}
\thanks{\IEEEauthorrefmark{3} H. E. Tseng, S. Huang,  and D. Filev are with Ford Research and Innovation Center, 2101 Village Road, Dearborn, MI 48124, USA (e-mail: htseng@ford.com, shuang10@ford.com, and dfilev@ford.com).}
\thanks{This work is supported by Ford Motor Company.}
}
\maketitle
\markboth{IEEE Transactions on Intelligent Transportation Systems} {Liu \MakeLowercase{\textit{et al.}}: Potential Game Based  Frameworks for Decision-Making in Autonomous Driving}
\begin{abstract}
%To develop safe and effective decision-making algorithms for autonomous vehicles (AVs), the interactions with other traffic agents, including AVs, human drivers, and pedestrians, have to be considered. As every agent has its self-interest, the multi-agent decision-making problem can be naturally formulated as a multi-player game. However, applying game-theoretic approaches to autonomous driving is challenging, considering the high computational complexity of solving a multi-player game and the lack of theoretical guarantees for game properties (e.g., existence and convergence of solution) in autonomous driving applications. To address these challenges,  
Decision-making for autonomous driving is challenging, considering the complex interactions among multiple traffic agents (including autonomous vehicles (AVs), human-driven vehicles, and pedestrians) and the computational load needed to evaluate these interactions. 
This paper develops two  general potential game-based  frameworks, namely, finite and continuous potential games, for decision-making in autonomous driving.  The two frameworks account for the AVs' two types of action spaces, i.e., finite and continuous action spaces, respectively. The developed frameworks provide theoretical guarantees for the existence of pure-strategy Nash equilibria and for the convergence of the Nash equilibrium (NE) seeking algorithms. The scalability challenge is also addressed. In addition, we provide cost function shaping approaches such that the agents' cost functions not only reflect common driving objectives but also yield potential games. The performance of the developed algorithms is demonstrated in diverse traffic scenarios, including  intersection-crossing and lane-changing scenarios.  Statistical comparative studies, including 1) finite potential game vs. continuous potential game, 2) best response dynamics vs. potential function optimization, {and 3) potential game vs. reinforcement learning (RL) vs. control barrier function (CBF)}, are conducted to compare the robustness  against various surrounding vehicles' strategies and to compare the computational efficiency.  
 %practicability (i.e., running time of each decision-making) of various algorithms. 
 {It is shown that the developed potential game frameworks have better robustness than RL and than CBF if the surrounding vehicles are not safety-conscious, and are computationally feasible for real-time implementation.} % indicating satisfactory reliability, applicability, and practicability in autonomous driving. 
\end{abstract}

\section{Introduction}
 During the past five years, the autonomous vehicle (AV) market has attracted more than $\$50$ billion {in investment} from major carmakers, tech giants and start-ups, and is expected to continue its rapid growth \cite{online1}. Although the benefits of AVs are substantial,  technical challenges, such as  intelligent decision-making  in diverse  traffic scenarios, still remain to be addressed  before the AVs can {routinely} drive on the roads \cite{challenge3,challenge1,challenge2}. Developing   practical and reliable decision-making algorithms for AVs is challenging, due to the complexities of the interactions of multiple traffic participants  and the requirement that decisions are made in real-time \cite{decision_add1,decision2,decision3}.

To generate  safe and effective decisions for the ego vehicle, the interactions with its surrounding traffic agents, including AVs, human-driven vehicles, and  pedestrians, have to be considered\cite{interaction_2,interaction_1,interaction_3,nan_game}. As  each traffic agent  has its own objective (also called self-interest), the multi-agent decision-making problem is intrinsically a multi-player game problem \cite{decision3,game_sta,Victor,mine_1,game_new_1}. In a game-theoretic {setting},  each agent aims to optimize its self-interest, which is affected by not only its own actions but also the actions of its surrounding agents.   {Along these lines, reference}  \cite{game_merge} develops a two-player nonzero-sum game to model  the interactions between a highway vehicle and a merging vehicle in merging-giveaway scenarios.  Papers \cite{game_sta,Victor} {consider} a two-player Stackelberg game and a pairwise normal-form game, respectively,  to address the decision-making in lane-changing scenarios.
  To capture the agents' interactions  in  intersection-crossing scenarios, a leader-follower game and a normal-form game are introduced, respectively, in  \cite{game_leader} and \cite{mine_1}.  To solve for Nash equilibrium (NE) {in} multi-agent Markov  games,  a best response dynamics based solution algorithm is proposed in   \cite{suzhou}, with numerical examples in  two-vehicle highway merging scenarios. {Although these game-theoretic approaches have shown effectiveness in characterizing agents' interactions and in generating human-like negotiating behaviors,  technical challenges, including existence of pure-strategy solutions, convergence of solution seeking algorithms, and computational scalability, still remain to be addressed before the game-theoretic approaches can be widely-accepted as a practical and reliable solution to autonomous driving.    %, still remain to be addressed to improve practicability and reliability of game-theoretic approaches in autonomous driving. 
  This paper aims to address these  challenges by developing a novel potential game based framework integrated with receding horizon optimization.}
  
  %Due to  scalability issues, most of the aforementioned works focus on two-agent scenarios or {handle} multi-agent scenarios in a pairwise manner. However, as pointed out in \cite{mine_2}, such pairwise  {treatment} can {be detrimental to} the AVs' driving performance, as it only captures the local interaction between a pair of vehicles  and ignores the interaction correlations among multiple vehicle pairs.    Although multi-player games provide promising solutions to model the interactions of multiple traffic agents, theoretical guarantees in such multi-player {games}, including the existence of solution, convergence of solution algorithms, and global optimality of the derived solution,  are still {lacking}. %lack of rigorous analysis. % Moreover, with the game-theoretic approaches, the AVs' driving performances, such as safety, travel efficiency, and robustness against the uncertain behaviors of the surrounding agents, are also lack of comprehensive studies.   
  
 {Potential games \cite{first} are a special class of multi-player games where a real-valued global function (called potential function) exists such that a change of an agent's self-interest by its own strategy deviation equals the change of the potential function. Potential games have appealing properties, including existence and attainability of pure-strategy Nash equilibria, and as such, have attracted increasing attention in economics \cite{economics}, social systems \cite{nature}, and wireless networks \cite{wirless}. However, verifying whether a given game is a potential game is not easy \cite{hard}, limiting  broader applications of potential games to engineering problems such as autonomous driving.}

{This paper proposes a novel formulation of potential games to address the decision-making in autonomous driving.  %multi-player game {theory} based  framework  for decision-making in autonomous driving,  {for which theoretical guarantees are available and computational cost is affordable}. % and applicability to diverse traffic scenarios. %and insights on the effects of game-theoretic approaches on the AVs'  driving performances. %to address decision-making in autonomous driving. 
The contributions are fourfold:
\begin{enumerate}
    \item A finite potential game-based framework is proposed, where the AV action space {contains} a finite number of elements. This framework features an integration of receding horizon optimization and potential games, and a cost function shaping approach such that the designed cost functions not only reflect common driving objectives but also yield potential games.  Existence and convergence to pure-strategy NE  is guaranteed, and the scalability challenge is also addressed. %This framework features a cost function shaping approach {to facilitate applications to} autonomous driving.   To efficiently solve the finite potential game, two solution algorithms, including best response dynamics and potential function optimization, are also developed.
    
    \item A continuous potential game-based framework is developed, where the AV action space {contains} an infinite number of elements. Theoretical analysis, cost function shaping approaches, and scalable NE seeking algorithms are provided. 
    \item Practical formulations of the proposed potential games for specific traffic scenarios, including  intersection-crossing and lane-changing scenarios, are provided. Numerical studies in these scenarios are conducted.
    
    \item Statistical comparisons are performed for  a) finite potential game vs.  continuous potential game, b) best response dynamics vs. potential function optimization, and c) potential game vs. reinforcement learning vs. control barrier function based approaches. These comparisons highlight the  robustness and computational efficiency characteristics of various algorithms. %{are presented} to provide insights into the characteristics of the proposed solution algorithms in diverse situations. %. The comparison terms include a) computational efficiency, b) AVs' safety, c) AVs' travel efficiency, and d) solution robustness against the uncertain behaviors of the surrounding agents.
\end{enumerate}
}

%{To summarize, this paper} provides theoretical guarantees for game properties of multi-player games in autonomous driving, proposes approaches to constructing potential games for AVs, and develops game solution algorithms  that  generate the global optimal NE while being computationally scalable. The developed {approaches provide a} foundation for employing multi-player games in autonomous driving, and provide mathematical tools for unifying non-cooperative games and cooperative optimization in dynamical systems. 

This paper is organized as follows.  Section \ref{II} formulates the AV decision-making as a multi-player game problem.   Section  \ref{sec:finite} develops a finite potential game framework for AVs {in a setting of} finite action spaces. Section \ref{sec_continuous} develops a continuous potential game framework for AVs {in a setting of}  continuous action spaces.  Section \ref{simulation}  conducts numerical studies,  and Section \ref{VI} concludes the paper.

\section{Problem formulation}\label{II}
Consider a group of traffic agents $\mathcal{N}=\{1,2,...,N\}$ sharing the road. The dynamics of  each agent are described by the following {discrete-time equations}.
\begin{equation}\label{dynamics}
    x_i(t+1)=f_i(x_i(t),u_i(t)),
\end{equation}
where $i\in\mathcal{N}$, $x_i(t)\in\mathcal{X}_i$ and $u_i(t)\in\mathcal{U}_i$ are, respectively, the state and action of agent $i$ at the time step $t$,   $\mathcal{X}_i$ and $\mathcal{U}_i$ are the state space and  action space of agent $i$, respectively,  and $f_i$ is the system evolution model of agent $i$. We denote the dimension of $x_i$ and $u_i$ as $n_i$ and $m_i$, respectively.  In autonomous driving applications, $x_i$ usually contains the vehicle's  position and velocity information, and $a_i$ represents maneuvers such as braking and/or lane-changing. 

In a driving scenario, every agent  has its own objective, e.g., tracking its desired speed while avoiding collisions. We denote the performance index of agent $i$ at time step $t$ as $J_i(x_i(t),u_i(t),x_{-i}(t),u_{-i}(t))$, where $x_{-i}$ and $u_{-i}$ represent the states and actions of all other agents except for agent $i$, i.e., $x_{-i}=\{x_1,x_2,...,x_{i-1},x_{i+1},...,x_N\}$, $u_{-i}=\{u_1,u_2,...,u_{i-1},u_{i+1},...,u_N\}$.  {Define  $x(t)$  as the global system state, $x(t)=\{x_i(t),x_{-i}(t)\}$, and $u(t)$ as the global action $u(t)=\{u_i(t),u_{-i}(t)\}$.} At each time step $t$, an agent $i$ plans its optimal action sequence (also referred to as strategy)  for the next $T$ ($T\geq 1$) time periods, i.e., $\mathbf{u}^*_i(t)=\{u^*_i(t),u^*_i(t+1),...,u^*_i(t+T-1)\}$, to minimize the cumulative cost over the prediction horizon $\mathcal{T}=\{t,t+1,...,t+T-1\}$, that is,
\begin{equation}\label{value}
\begin{split}
    \mathbf{u}^*_i(t)&\in\argmin_{\mathbf{u}_i(t)\in\mathcal{S}_i} V_i^t(\mathbf{u}_i(t),\mathbf{u}_{-i}(t))\\
    &=\argmin_{\mathbf{u}_i(t)\in\mathcal{S}_i} \sum_{\tau=t}^{t+T-1}J_i(x_i(\tau),u_i(\tau),x_{-i}(\tau),u_{-i}(\tau)),
    \end{split}
\end{equation}
where $\mathcal{S}_i$ is the strategy space of agent $i$ and is determined by its action space $\mathcal{U}_i$. Denote $\mathcal{S}=\mathcal{S}_1\times\mathcal{S}_2\times...\times\mathcal{S}_N$ and $\mathbf{u}(t)=\{\mathbf{u}_i(t),\mathbf{u}_{-i}(t)\}$, where $\mathbf{u}_{-i}=\{\mathbf{u}_1,\mathbf{u}_2,...,\mathbf{u}_{i-1},\mathbf{u}_{i+1},...,\mathbf{u}_N\}$. Here $V_i^t:\mathcal{S}\rightarrow \mathbb{R}$ is the cumulative cost over the prediction horizon.  
It is assumed that the initial states of all agents are available, i.e., $x(t)$ is known to all $i\in\mathcal{N}$ at time $t$.

To enable an AV to respond timely to any changes in its environment (e.g., a sudden brake of the front vehicle), we employ the receding-horizon optimization: At time $t$, an AV plans its optimal action sequence  $\mathbf{u}^*_i(t)$ according to \eqref{value}, implements the first element $u^*_i(t)$ only, and then repeats the process at the next time step $t+1$ with a shifted horizon. %This receding-horizon optimization, the AVs take into {consideration} the future information in their current decision-making and are able to adjust their strategies  at each $t$ according to the most updated information.  

As shown in \eqref{value}, agent $i$'s cost is affected by not only its own strategy $\mathbf{u}_i(t)$, but also the strategies of its surrounding agents $\mathbf{u}_{-i}(t)$. As such, the multi-agent optimization problem  \eqref{value} is intrinsically a multi-player game problem. 

We denote the game at $t$ as $\mathcal{G}^t=\{\mathcal{N},\mathcal{S},\{V_i^t\}_{i\in\mathcal{N}}\}$, which is the receding horizon multi-player game we aim to solve. The set of all agents' optimal strategies $\{\mathbf{u}_1^*(t),\mathbf{u}_2^*(t),...,\mathbf{u}_N^*(t)\}$ that satisfy \eqref{value}, if nonempty, yields a pure-strategy Nash equilibrium defined as follows. % according to Definition \ref{d2} If all agents aim to optimize their own driving performances characterized by $V_i^t(\mathbf{u}_i(t),\mathbf{u}_{-i}(t))$, then a pure-strategy Nash equilibrium is achieved. 

\begin{defn} [Pure-Strategy Nash Equilibrium \cite{game_book}] \label{d2}
An $N$-tuple of strategies $\{\mathbf{u}_1^*(t),\mathbf{u}_2^*(t),...,\mathbf{u}_N^*(t)\}$ is  a pure-strategy Nash equilibrium for the $N$-player game \eqref{value} if and only if   
\begin{equation}\label{Nash_de}
\begin{split}
& V_i^t(\mathbf{u}_i^*(t),\mathbf{u}_{-i}^*(t))\leq V_i^t(\mathbf{u}_i(t),\mathbf{u}_{-i}^*(t)),\\
 &\qquad \forall i\in\mathcal{N}, \forall \mathbf{u}_i(t)\in\mathcal{S}_i.
 \end{split}
\end{equation}
\end{defn}
Equation \eqref{Nash_de} implies that if a pure-strategy NE is achieved, then  no player would have the incentive to change its strategy unilaterally, and the system is in an equilibrium state. Since we are considering autonomous driving applications and the strategy usually reflects the AV maneuvers, we only focus on  pure-strategy NE in this paper, and the word ``pure-strategy" may be omitted if no confusion. % in later sections. 

{
Although the receding horizon multi-player game \eqref{value} can model the AV decision-making nicely, solving such a multi-player game is not straightforward in general. The challenges include 1) Given arbitrary $V_i^t$, the NE $\mathbf{u}^*(t)$ that satisfies \eqref{Nash_de} may not always exist \cite{add_PSNE}; 2) Even if there exists a NE, the NE seeking algorithm, e.g., the best-response dynamics, may not always converge \cite{first}; 3) Even if the algorithm converges, solving for NE is computationally expensive in general, and the computational complexity can exponentially increase with the number of agents \cite{add_exponential}. These challenges limit the practical applicability of  multi-player games to autonomous driving. We aim to address these challenges in this paper by formulating the game \eqref{value} as a potential game  by  designing the AV cost functions, $V_i^t$, appropriately for various driving scenarios so that they can reflect common driving objectives while leading to potential games. }
% To address these challenges,  in the following two sections, we propose two novel game-theoretic frameworks, namely, finite and continuous potential game frameworks.   %  formulate the multi-player game in \eqref{value} as a special class of game: finite exact potential game. With this formulation, the existence and attainability  of pure-strategy Nash equilibria are guaranteed, the global optimal Nash can be derived, and the computational cost of solving the multi-player game becomes affordable for real-time decision-making. %Solution algorithms that are guaranteed to convergence to the pure-strategy Nash are then developed for the proposed games. 

\section{Finite potential game}\label{sec:finite}
In this section, we consider the multi-player game \eqref{value} in the setting of finite strategy spaces. Specifically, $\mathcal{S}_i$ has a finite number of elements for all $i\in\mathcal{N}$. %(i.e., only a limited number of elements are included in $\mathcal{S}_i$).
%Definitions and fundamental properties of finite exact potential games are provided in \ref{finite_def}, and the approach to formulating the game \eqref{value} as a finite exact potential game is proposed in \ref{finite_fomulation}.  % show the fundamental properties of such games, and then propose a cumulative cost form such that if the AVs' cumulative cost are designed to be of the form, then the formulated game \eqref{value} is . 
{\subsection{Preliminaries}\label{finite_def}}
%Potential games are a special class of games.  with {appealing} properties. % including the guaranteed existence  and  attainability of a pure-strategy NE. 
We first recall the definition of finite potential games. 
\begin{defn} [Finite Exact Potential Game \cite{potential_book}] \label{d3}
The game \eqref{value} is a finite exact potential game if and only if the strategy space $\mathcal{S}_i$ ($ i\in\mathcal{N}$) contains a finite number of elements  and a potential function {$F^t:\mathcal{S}\rightarrow \mathbb{R}$} exists such that, $\forall i\in\mathcal{N}$,
\begin{equation}\label{Potential_def}
\begin{split}
& V_i^t(\mathbf{u}_i(t),\mathbf{u}_{-i}(t))- V_i^t(\mathbf{u}'_i(t),\mathbf{u}_{-i}(t))\\
 &= F^t(\mathbf{u}_i(t),\mathbf{u}_{-i}(t))- F^t(\mathbf{u}'_i(t),\mathbf{u}_{-i}(t)),\\
& \forall \mathbf{u}_i(t),\mathbf{u}'_i(t)\in \mathcal{S}_i, \text{and }   \forall \mathbf{u}_{-i}(t)\in \mathcal{S}_{-i}.
\end{split}
\end{equation}
\end{defn}
Equation \eqref{Potential_def} implies that the change of agent $i$'s {cumulative cost} caused by its own strategy deviation leads to exactly the same amount of change in the potential function. Note that the potential function $F^t$  does not have a subscript, implying that it is {the same} for all agents. 
This paper only considers exact potential games, and the word  ``exact" will subsequently be omitted. 

{Potential games have appealing properties. Two of these properties are summarized by the following two lemmas, which will be used to facilitate the analysis in this paper.}

\begin{lemma}\label{l1}[Existence of pure-strategy Nash equilibria \cite{first}] If the game \eqref{value} is a finite exact potential game, then there exists  at least one pure-strategy Nash equilibrium.
\end{lemma}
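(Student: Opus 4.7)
The plan is to construct a pure-strategy Nash equilibrium explicitly as a minimizer of the potential function $F^t$ over the joint strategy space $\mathcal{S}$, and then use the defining identity \eqref{Potential_def} to translate minimality of $F^t$ into the Nash inequality \eqref{Nash_de} of Definition \ref{d2}. This is the classical Monderer--Shapley-style argument, and it is short because the potential function has already done the conceptual work of reducing the multi-agent fixed-point problem to a single-function optimization.

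First I would invoke finiteness: since every $\mathcal{S}_i$ is finite by assumption, the product $\mathcal{S}=\mathcal{S}_1\times\cdots\times\mathcal{S}_N$ is also finite, so the real-valued function $F^t$ attains a minimum on $\mathcal{S}$. Pick any minimizer
\[
\mathbf{u}^*(t)=(\mathbf{u}_1^*(t),\ldots,\mathbf{u}_N^*(t))\in\argmin_{\mathbf{u}(t)\in\mathcal{S}} F^t(\mathbf{u}(t)).
\]
This step requires no game-theoretic input; it only uses that a real-valued function on a finite set has a minimum.

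Next I would verify the Nash condition \eqref{Nash_de} at $\mathbf{u}^*(t)$. Fix an arbitrary agent $i\in\mathcal{N}$ and an arbitrary alternative strategy $\mathbf{u}_i(t)\in\mathcal{S}_i$. By minimality of $\mathbf{u}^*(t)$, $F^t(\mathbf{u}_i^*(t),\mathbf{u}_{-i}^*(t))\leq F^t(\mathbf{u}_i(t),\mathbf{u}_{-i}^*(t))$. Applying \eqref{Potential_def} with $\mathbf{u}'_i(t)=\mathbf{u}_i(t)$ and the held-fixed profile $\mathbf{u}_{-i}^*(t)$, the difference in $V_i^t$ across these two strategy profiles equals the corresponding difference in $F^t$, hence $V_i^t(\mathbf{u}_i^*(t),\mathbf{u}_{-i}^*(t))\leq V_i^t(\mathbf{u}_i(t),\mathbf{u}_{-i}^*(t))$. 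Since $i$ and $\mathbf{u}_i(t)$ were arbitrary, this is exactly \eqref{Nash_de}.

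There is no genuinely hard step here; once the existence of a potential function is assumed, the proof is essentially a one-line observation. The only point requiring attention is the sign convention: because \eqref{value} is posed in minimization form rather than the more common maximization form, one must select a \emph{minimizer} of $F^t$ (maximizers need not yield an NE in this formulation). The argument also makes clear why finiteness of each $\mathcal{S}_i$ is essential: without it, $F^t$ need not attain its infimum, which is precisely why the continuous case in Section \ref{sec_continuous} will require a separate treatment with additional regularity and compactness assumptions.
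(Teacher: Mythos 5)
Your proof is correct and is exactly the classical Monderer--Shapley argument that the paper relies on by citation (the paper itself gives no proof of Lemma \ref{l1}, deferring to \cite{first}), and it mirrors the argument the paper does spell out for the continuous analogue in Appendix C, where a minimizer of $F^t$ is shown to be a NE. The sign-convention caveat (minimizer, not maximizer, because \eqref{value} is a minimization game) and the observation that finiteness is what guarantees the minimum is attained are both on point.
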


%{The next} lemma, along with Definition \ref{d4}, are to prepare for the guaranteed NE attainability result, which is presented in Lemma \ref{l3}. The proof of lemma \ref{l2} is in Appendix B.

\begin{lemma}\label{l2}[Equivalence of Nash equilibrium sets \cite{potential_book}]
If the game \eqref{value} is a finite exact potential game with $F^t$ as a potential function, then the set of pure-strategy Nash equilibria of \eqref{value} coincides with the set of pure-strategy Nash equilibria of the identical-interest game with all agents' cumulative cost equal to the potential function. That is,
\begin{equation}
    \text{NESet}(\mathcal{G}^t)=\text{NESet}(\Tilde{\mathcal{G}}^t),
\end{equation}
where $\text{NESet}$ denotes the set of pure-strategy NE, and $\Tilde{\mathcal{G}}^t=\{\mathcal{N},\mathcal{A},\{F^t\}_{i\in\mathcal{N}}\}$.
\end{lemma}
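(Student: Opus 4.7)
The plan is to prove the two set inclusions $\text{NESet}(\mathcal{G}^t)\subseteq \text{NESet}(\tilde{\mathcal{G}}^t)$ and $\text{NESet}(\tilde{\mathcal{G}}^t)\subseteq \text{NESet}(\mathcal{G}^t)$ separately, each of which follows almost immediately by substituting the potential function identity \eqref{Potential_def} into the Nash equilibrium inequality \eqref{Nash_de}. The key observation is that \eqref{Potential_def} says the \emph{differences} of $V_i^t$ and $F^t$ coincide when only agent $i$'s strategy is varied, and the Nash condition is phrased purely in terms of such unilateral differences, so the two conditions are logically equivalent on an agent-by-agent basis.

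For the first inclusion, I would take an arbitrary $\mathbf{u}^*(t)=\{\mathbf{u}_1^*(t),\ldots,\mathbf{u}_N^*(t)\}\in \text{NESet}(\mathcal{G}^t)$. By Definition \ref{d2}, for every $i\in\mathcal{N}$ and every $\mathbf{u}_i(t)\in \mathcal{S}_i$,
\begin{equation*}
V_i^t(\mathbf{u}_i^*(t),\mathbf{u}_{-i}^*(t))-V_i^t(\mathbf{u}_i(t),\mathbf{u}_{-i}^*(t))\leq 0.
\end{equation*}
Applying \eqref{Potential_def} to the opponent profile $\mathbf{u}_{-i}^*(t)$, the left-hand side equals $F^t(\mathbf{u}_i^*(t),\mathbf{u}_{-i}^*(t))-F^t(\mathbf{u}_i(t),\mathbf{u}_{-i}^*(t))$, which shows that $\mathbf{u}^*(t)$ satisfies the Nash condition in the identical-interest game $\tilde{\mathcal{G}}^t$ (whose common cost is $F^t$), hence $\mathbf{u}^*(t)\in \text{NESet}(\tilde{\mathcal{G}}^t)$.

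The reverse inclusion is obtained by the same computation run in the opposite direction: starting from $\mathbf{u}^*(t)\in \text{NESet}(\tilde{\mathcal{G}}^t)$, the inequality $F^t(\mathbf{u}_i^*(t),\mathbf{u}_{-i}^*(t))-F^t(\mathbf{u}_i(t),\mathbf{u}_{-i}^*(t))\leq 0$ holds for every $i$ and every $\mathbf{u}_i(t)\in\mathcal{S}_i$, and \eqref{Potential_def} again converts this into the corresponding inequality for $V_i^t$, placing $\mathbf{u}^*(t)$ in $\text{NESet}(\mathcal{G}^t)$. Concatenating the two inclusions yields the desired equality of NE sets.

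There is essentially no hard step here: the whole argument is a direct unpacking of the potential-game identity \eqref{Potential_def} applied to unilateral deviations. The only mild care required is in the bookkeeping — checking that the equivalence holds for every $i$ and every unilateral deviation, and noting that the finiteness of $\mathcal{S}_i$ is not actually needed for this particular equivalence (it is needed for Lemma \ref{l1}, but not here). Because the proof is so short and mechanical, in the final write-up I would state it in a compact ``if and only if'' form rather than as two separate inclusions.
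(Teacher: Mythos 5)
Your proof is correct: the exact-potential identity \eqref{Potential_def}, applied with $\mathbf{u}_{-i}(t)=\mathbf{u}_{-i}^*(t)$, converts the unilateral-deviation inequality defining a Nash equilibrium of $\mathcal{G}^t$ into the corresponding inequality for $\Tilde{\mathcal{G}}^t$ and vice versa, and your remark that finiteness of $\mathcal{S}_i$ is not needed for this equivalence is also accurate. The paper itself does not prove Lemma \ref{l2} (it cites the potential-games literature), but your argument is the standard one and coincides in substance with the paper's own proof of the continuous analogue, Lemma \ref{l4} in Appendix B, which likewise reduces to the observation that $V_i^t-F^t$ is independent of $\mathbf{u}_i$.
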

{\subsection{Constructing finite potential games for autonomous driving}\label{finite_fomulation}}
   
Consider the decision-making in autonomous driving according to \eqref{value}. In this subsection, we show how to design $V_i^t$ such that \eqref{value} is a finite potential game. 

Theorem \ref{t1} shows that if $V_i^t$ depends only on $\mathbf{u}_i(t)$, then the resulting game is a finite potential game. 

\begin{thm}\label{t1}
Let the cumulative cost in \eqref{value} be of the following form,
\begin{equation}\label{finite_self}
    V_i^t(\mathbf{u}_i(t),\mathbf{u}_{-i}(t))=V_i^{t,self}(\mathbf{u}_i(t)),
\end{equation}
 where $V_i^{t,self}$ is a function that is {determined} solely by agent $i$'s strategy $\mathbf{u}_i(t)$. Then the game \eqref{value} is a finite exact potential game with a potential function $P^t$ defined as
\begin{equation}\label{p_1}
    P^t(\mathbf{u}(t))=\sum_{i\in\mathcal{N}}V_i^{t,self}(\mathbf{u}_i(t)).
\end{equation}
\end{thm}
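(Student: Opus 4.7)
The plan is to verify Definition~\ref{d3} directly by a one-line telescoping computation, since the separable structure of $V_i^t$ makes the potential condition almost tautological. No fixed-point or existence arguments are required here; the work is purely algebraic.

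First I would fix an arbitrary agent $i \in \mathcal{N}$, two arbitrary own-strategies $\mathbf{u}_i(t), \mathbf{u}'_i(t) \in \mathcal{S}_i$, and an arbitrary profile of opponent strategies $\mathbf{u}_{-i}(t) \in \mathcal{S}_{-i}$. By the hypothesis \eqref{finite_self}, the left-hand side of the potential equation \eqref{Potential_def} simplifies immediately to
\begin{equation*}
V_i^t(\mathbf{u}_i(t),\mathbf{u}_{-i}(t)) - V_i^t(\mathbf{u}'_i(t),\mathbf{u}_{-i}(t)) = V_i^{t,self}(\mathbf{u}_i(t)) - V_i^{t,self}(\mathbf{u}'_i(t)),
\end{equation*}
since neither term depends on $\mathbf{u}_{-i}(t)$.

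Next I would expand the right-hand side using the candidate potential \eqref{p_1}. Writing
\begin{equation*}
P^t(\mathbf{u}_i(t),\mathbf{u}_{-i}(t)) - P^t(\mathbf{u}'_i(t),\mathbf{u}_{-i}(t)) = \Bigl[V_i^{t,self}(\mathbf{u}_i(t)) + \sum_{j \neq i} V_j^{t,self}(\mathbf{u}_j(t))\Bigr] - \Bigl[V_i^{t,self}(\mathbf{u}'_i(t)) + \sum_{j \neq i} V_j^{t,self}(\mathbf{u}_j(t))\Bigr],
\end{equation*}
the terms indexed by $j \neq i$ cancel because the opponent profile $\mathbf{u}_{-i}(t)$ is held fixed, leaving exactly $V_i^{t,self}(\mathbf{u}_i(t)) - V_i^{t,self}(\mathbf{u}'_i(t))$. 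Matching this with the simplification of the left-hand side verifies \eqref{Potential_def} for all $i$, $\mathbf{u}_i(t)$, $\mathbf{u}'_i(t)$, and $\mathbf{u}_{-i}(t)$. Finiteness of $\mathcal{S}$ is inherited directly from the standing assumption of this section, so the game qualifies as a finite exact potential game in the sense of Definition~\ref{d3}.

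There is no real obstacle to overcome: the only subtlety is conceptual, namely noticing that the separability hypothesis \eqref{finite_self} turns the Nash game into a collection of decoupled single-agent optimization problems, for which summing the individual objectives is the canonical potential. I would close by remarking that in this degenerate case the NE, by Lemma~\ref{l2}, coincides with any joint minimizer of $P^t$, which here reduces to each agent independently minimizing $V_i^{t,self}$ over $\mathcal{S}_i$.
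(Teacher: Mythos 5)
Your proposal is correct and follows essentially the same route as the paper's own proof: both verify Definition \ref{d3} directly by computing the change in $V_i^t$ and in $P^t$ under a unilateral deviation and observing that the $j\neq i$ terms cancel. No gaps.
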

\begin{proof}
Given the cumulative cost \eqref{finite_self}, for any unilateral strategy deviation of agent $i$ from $\mathbf{u}_i$ to $\mathbf{u}'_i$, the following equation holds,
\begin{equation}
\begin{split}
    &V_i^t(\mathbf{u}_i(t),\mathbf{u}_{-i}(t))-V_i^t(\mathbf{u}'_i(t),\mathbf{u}_{-i}(t))\\
    &=V_i^{t,self}(\mathbf{u}_i(t))-V_i^{t,self}(\mathbf{u}'_i(t)).
    \end{split}
\end{equation}

Consider the function $P^t$  in \eqref{p_1}. The strategy deviation from $\mathbf{u}_i(t)$ to $\mathbf{u}'_i(t)$ leads to the change of  this function as
\begin{equation}
\begin{split}
   & P^t(\mathbf{u}_i(t),\mathbf{u}_{-i}(t))-P^t(\mathbf{u}'_i(t),\mathbf{u}_{-i}(t))\\&=V_i^{t,self}(\mathbf{u}_i(t))+\sum_{j\in\mathcal{N},j\neq i}V_j^{t,self}(\mathbf{u}_j(t))\\
   &\quad-V_i^{t,self}(\mathbf{u}'_i(t))-\sum_{j\in\mathcal{N},j\neq i}V_j^{t,self}(\mathbf{u}_j(t))\\
   &=V_i^{t,self}(\mathbf{u}_i(t))-V_i^{t,self}(\mathbf{u}'_i(t)).
\end{split}
\end{equation}
Because $V_i^t(\mathbf{u}_i(t),\mathbf{u}_{-i}(t))-V_i^t(\mathbf{u}'_i(t),\mathbf{u}_{-i}(t))=P^t(\mathbf{u}_i(t),\mathbf{u}_{-i}(t))-P^t(\mathbf{u}'_i(t),\mathbf{u}_{-i}(t))$ holds for all $i\in\mathcal{N}$, $\mathbf{u}_i(t),\mathbf{u}'_i(t)\in \mathcal{S}_i, \text{ and }  \mathbf{u}_{-i}(t)\in \mathcal{S}_{-i}$, the function $P^t$ is a potential function, and the formulated  game is a finite potential game.
\end{proof}

{
\begin{remark}
Theorem \ref{t1} states that if the cumulative cost $V_i^t$ satisfies \eqref{finite_self}, then the game \eqref{value} is a finite potential game. Such a cost can characterize self-centered driving objectives, such as ride comfort (often reflected by vehicle jerk \cite{add_jerk}), travel efficiency (captured by vehicle speed \cite{add_travelefficiency}), fuel efficiency (dependent on speed, acceleration, and jerk \cite{add_fuel}), keeping in the center of a lane (reflected by vehicle lateral position), and so on. A specific cost function that motivates the AV to track a desired speed and that satisfies \eqref{finite_self} is provided in the simulation section by \eqref{trackspeed}.
\end{remark}}

Theorem \ref{t2} considers  the cases where an AV's objective is jointly affected by both $\mathbf{u}_i(t)$ and $\mathbf{u}_{-i}(t)$, in a pairwise symmetric manner.
\begin{thm}\label{t2}
Let the cumulative cost in \eqref{value} be of the following form{,}
\begin{equation}\label{finite_pair}
    V_i^t(\mathbf{u}_i(t),\mathbf{u}_{-i}(t))=\sum_{j\in\mathcal{N},j\neq i} V_{ij}^t(\mathbf{u}_i(t),\mathbf{u}_{j}(t)),
\end{equation}
where $V_{ij}^t(\mathbf{u}_i(t),\mathbf{u}_{j}(t))=V_{ji}^t(\mathbf{u}_j(t),\mathbf{u}_{i}(t)), \forall i,j\in\mathcal{N}, i\neq j, \forall \mathbf{u}_i(t)\in \mathcal{S}_i,\forall\mathbf{u}_{j}(t)\in \mathcal{S}_j$.
Then the game \eqref{value} is a finite exact potential game with the following potential function{,} 
\begin{equation}\label{p_2}
    G^t(\mathbf{u}(t))=\sum_{i\in \mathcal{N}}\sum_{j\in\mathcal{N},j< i} V_{ij}^t(\mathbf{u}_i(t),\mathbf{u}_{j}(t)).
\end{equation}
\end{thm}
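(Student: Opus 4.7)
The plan is to verify the potential game condition \eqref{Potential_def} directly: for any agent $i$ and any unilateral deviation from $\mathbf{u}_i(t)$ to $\mathbf{u}'_i(t)$, I would show that the change in $V_i^t$ equals the change in the candidate potential $G^t$ given by \eqref{p_2}. The left-hand side is immediate from \eqref{finite_pair}: since only the terms $V_{ij}^t(\mathbf{u}_i,\mathbf{u}_j)$ depend on $\mathbf{u}_i$, we get
\begin{equation*}
V_i^t(\mathbf{u}_i,\mathbf{u}_{-i}) - V_i^t(\mathbf{u}'_i,\mathbf{u}_{-i}) = \sum_{j \in \mathcal{N}, j \neq i} \bigl[V_{ij}^t(\mathbf{u}_i,\mathbf{u}_j) - V_{ij}^t(\mathbf{u}'_i,\mathbf{u}_j)\bigr].
\end{equation*}

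For the right-hand side, I would decompose the double sum in \eqref{p_2} into the terms that depend on $\mathbf{u}_i$ and the terms that do not. The asymmetric index condition $j < i$ means $\mathbf{u}_i$ appears in two disjoint groups of terms: those with $k=i$ and $l<i$ (giving $V_{il}^t(\mathbf{u}_i,\mathbf{u}_l)$), and those with $l=i$ and $k>i$ (giving $V_{ki}^t(\mathbf{u}_k,\mathbf{u}_i)$). All remaining terms are independent of $\mathbf{u}_i$ and cancel under the subtraction $G^t(\mathbf{u}_i,\mathbf{u}_{-i}) - G^t(\mathbf{u}'_i,\mathbf{u}_{-i})$.

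The key step, and the main (mild) obstacle, is merging these two index groups back into a single clean sum over $j \neq i$. This is exactly where the symmetry assumption $V_{ij}^t(\mathbf{u}_i,\mathbf{u}_j)=V_{ji}^t(\mathbf{u}_j,\mathbf{u}_i)$ enters: applying it to the second group $V_{ki}^t(\mathbf{u}_k,\mathbf{u}_i) = V_{ik}^t(\mathbf{u}_i,\mathbf{u}_k)$ rewrites every $\mathbf{u}_i$-dependent term in the uniform form $V_{ij}^t(\mathbf{u}_i,\mathbf{u}_j)$, and the union of indices $\{j<i\} \cup \{j>i\}$ is precisely $\{j \neq i\}$. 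After this reindexing I would conclude
\begin{equation*}
G^t(\mathbf{u}_i,\mathbf{u}_{-i}) - G^t(\mathbf{u}'_i,\mathbf{u}_{-i}) = \sum_{j \in \mathcal{N}, j \neq i} \bigl[V_{ij}^t(\mathbf{u}_i,\mathbf{u}_j) - V_{ij}^t(\mathbf{u}'_i,\mathbf{u}_j)\bigr],
\end{equation*}
which matches the left-hand side. Since $i$, $\mathbf{u}_i$, $\mathbf{u}'_i$, and $\mathbf{u}_{-i}$ were arbitrary, \eqref{Potential_def} holds and $G^t$ is a potential function, completing the proof. The argument is purely algebraic and does not use finiteness of $\mathcal{S}_i$ beyond what is needed to fit Definition \ref{d3}.
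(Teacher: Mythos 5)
Your proof is correct and follows essentially the same route as the paper's: both verify the exact-potential condition \eqref{Potential_def} directly by showing the unilateral deviation changes $G^t$ by $\sum_{j\neq i}\bigl[V_{ij}^t(\mathbf{u}_i,\mathbf{u}_j)-V_{ij}^t(\mathbf{u}'_i,\mathbf{u}_j)\bigr]$, invoking the symmetry $V_{ij}^t=V_{ji}^t$ at the same point. You merely make explicit the index bookkeeping (splitting the $j<i$ double sum into the $k=i$ and $l=i$ groups) that the paper's Equation \eqref{E15} leaves implicit.
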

\begin{proof}
The strategy deviation of  agent $i$ from $\mathbf{u}_i$ to $\mathbf{u}'_i$ leads to the change  in its cumulative cost.
\begin{equation}
\begin{split}
    &V_i^t(\mathbf{u}_i(t),\mathbf{u}_{-i}(t))-V_i^t(\mathbf{u}'_i(t),\mathbf{u}_{-i}(t))\\
    &=\sum_{j\in\mathcal{N},j\neq i} V_{ij}^t(\mathbf{u}_i(t),\mathbf{u}_{j}(t))-\sum_{j\in\mathcal{N},j\neq i} V_{ij}^t(\mathbf{u}'_i(t),\mathbf{u}_{j}(t)).
    \end{split}
\end{equation}

The   deviation from $\mathbf{u}_i(t)$ to $\mathbf{u}'_i(t)$ leads to the change in the function $G^t$ defined in \eqref{p_2} as
\begin{equation}\label{E15}
\begin{split}
   & G^t(\mathbf{u}_i(t),\mathbf{u}_{-i}(t))-G^t(\mathbf{u}'_i(t),\mathbf{u}_{-i}(t))\\
    &=\sum_{j\in\mathcal{N},j\neq i} V_{ij}^t(\mathbf{u}_i(t),\mathbf{u}_{j}(t))-\sum_{j\in\mathcal{N},j\neq i} V_{ij}^t(\mathbf{u}'_i(t),\mathbf{u}_{j}(t)).
\end{split}
\end{equation}
Equation \eqref{E15} holds because $V_{ij}^t(\mathbf{u}_i(t),\mathbf{u}_{j}(t))=V_{ji}^t(\mathbf{u}_j(t),\mathbf{u}_{i}(t))$  $\forall i,j\in\mathcal{N}, i\neq j, \forall \mathbf{u}_i(t)\in \mathcal{S}_i,  \forall  \mathbf{u}_{j}(t)\in \mathcal{S}_j$.  
As such, the function $G^t$ is a potential function, and the formulated game is a finite exact potential game.
\end{proof}

{
\begin{remark}
Theorem \ref{t2} states that if the cumulative cost $V_i^t$ satisfies \eqref{finite_pair}, then the game \eqref{value} is a finite potential game.    In a typical driving scenario, vehicles interactions are often brought about by the desire of avoiding collisions. The cost design in \eqref{finite_pair} can characterize such an interaction, by applying a symmetric collision penalty to both vehicles if a collision between them happens. A specific example is provided in Section \ref{simulation} by \eqref{collision_design} and \eqref{tanh_cost}.
\end{remark}
}

Next theorem shows that if an AV's objective function contains both  self-dependent and pairwise components, then the resulted game is still a potential game.   
\begin{thm}\label{t3}
Let the cumulative cost in \eqref{value} be of the following form,
\begin{equation}\label{design_value}
\begin{split}
    &V_i^t(\mathbf{u}_i(t),\mathbf{u}_{-i}(t))\\
    &=\alpha V_i^{t,self}(\mathbf{u}_i(t))+\beta \sum_{j\in\mathcal{N},j\neq i} V_{ij}^t(\mathbf{u}_i(t),\mathbf{u}_{j}(t)),
\end{split}
\end{equation}
where $V_i^{t,self}(\mathbf{u}_i(t))$ and $V_{ij}^t(\mathbf{u}_i(t),\mathbf{u}_{j}(t))$ satisfy \eqref{finite_self} and \eqref{finite_pair}, respectively, and $\alpha\in\mathbb{R}$ and $\beta\in\mathbb{R}$. Then the game \eqref{value} is a finite exact potential game with the following potential function, 
\begin{equation}\label{total_potential}
\begin{split}
         &F^t(\mathbf{u}(t))=\alpha P^t(\mathbf{u}(t))+\beta G^t(\mathbf{u}(t))\\
    & =\alpha\sum_{i\in\mathcal{N}}V_i^{t,self}(\mathbf{u}_i(t))+\beta\sum_{i\in \mathcal{N}}\sum_{j\in\mathcal{N},j< i} V_{ij}^t(\mathbf{u}_i(t),\mathbf{u}_{j}(t)).
\end{split}
\end{equation}
\end{thm}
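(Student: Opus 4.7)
The plan is to prove Theorem \ref{t3} by exploiting the linearity of the potential-game condition \eqref{Potential_def} in the cost and reducing the claim directly to Theorems \ref{t1} and \ref{t2}, rather than redoing the algebra from scratch. The essential observation is that the potential-game property is closed under nonnegative (indeed, arbitrary real) linear combinations: if $V_i^{t,1}$ and $V_i^{t,2}$ are two cost systems admitting potential functions $F^{t,1}$ and $F^{t,2}$ respectively, then the cost system $\alpha V_i^{t,1} + \beta V_i^{t,2}$ admits $\alpha F^{t,1} + \beta F^{t,2}$ as a potential.

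First, I would fix an arbitrary agent $i \in \mathcal{N}$, an arbitrary pair of strategies $\mathbf{u}_i(t), \mathbf{u}'_i(t) \in \mathcal{S}_i$, and arbitrary $\mathbf{u}_{-i}(t) \in \mathcal{S}_{-i}$. Using \eqref{design_value}, I would write the difference
\begin{equation*}
V_i^t(\mathbf{u}_i(t),\mathbf{u}_{-i}(t)) - V_i^t(\mathbf{u}'_i(t),\mathbf{u}_{-i}(t))
\end{equation*}
as the sum of $\alpha$ times the difference of the self-dependent parts and $\beta$ times the difference of the pairwise parts. This is a one-line expansion that uses only the linearity of the definition of $V_i^t$.

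Next, I would invoke Theorem \ref{t1} applied to the cost $V_i^{t,self}$ to conclude that the change in the self-dependent part equals $P^t(\mathbf{u}_i(t),\mathbf{u}_{-i}(t)) - P^t(\mathbf{u}'_i(t),\mathbf{u}_{-i}(t))$, and Theorem \ref{t2} applied to the pairwise cost to conclude that the change in the pairwise part equals $G^t(\mathbf{u}_i(t),\mathbf{u}_{-i}(t)) - G^t(\mathbf{u}'_i(t),\mathbf{u}_{-i}(t))$. Combining these two identities and factoring yields
\begin{equation*}
V_i^t(\mathbf{u}_i(t),\mathbf{u}_{-i}(t)) - V_i^t(\mathbf{u}'_i(t),\mathbf{u}_{-i}(t)) = F^t(\mathbf{u}_i(t),\mathbf{u}_{-i}(t)) - F^t(\mathbf{u}'_i(t),\mathbf{u}_{-i}(t)),
\end{equation*}
with $F^t = \alpha P^t + \beta G^t$ as in \eqref{total_potential}. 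Since $i$, $\mathbf{u}_i(t)$, $\mathbf{u}'_i(t)$, and $\mathbf{u}_{-i}(t)$ were arbitrary, Definition \ref{d3} is satisfied and the game is a finite exact potential game.

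There is no real obstacle here; the proof is essentially a bookkeeping exercise that leverages the additive structure of the cost and the previously established theorems. The only point worth stating explicitly is that the sign convention and coefficients $\alpha$ and $\beta$ pass through unchanged because the potential-game condition \eqref{Potential_def} is an equation of differences that respects real scalar multiplication and addition. I would close by remarking that no restriction on the signs of $\alpha$ and $\beta$ is needed, which makes the result convenient for cost shaping: one can freely weight self-centered objectives (ride comfort, tracking) against interaction objectives (collision avoidance) without leaving the potential-game class.
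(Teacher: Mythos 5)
Your proposal is correct and matches the paper's proof, which simply states that the result follows by combining Theorems \ref{t1} and \ref{t2}; you have just written out the linearity/bookkeeping step that the paper leaves implicit. No gaps.
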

\begin{proof}
The proof is straightforward by combining Theorems \ref{t1} and \ref{t2}.
\end{proof}

{
\begin{remark}
    Theorem \ref{t3} provides a general template to design AV cost functions that yield finite potential games. The cost \eqref{design_value} can reflect driving objectives of, for example, tracking a desired speed while avoiding collisions with other vehicles.  We {note} that most of the AVs' cost function models in the existing works follow, or can be slightly revised to follow, the form proposed in \eqref{design_value}. See \cite{payoff1,payoff2,payoff3,dynamics} for {examples}. 
\end{remark}
}
{\subsection{Solving the constructed finite potential games}}
After formulating the game \eqref{value} as a finite potential game,  according to Lemma \ref{l1}, a pure-strategy NE always exists. Next we show how to solve this potential game and find the NE. 

One of the most commonly-used algorithms to seek NE is the best response dynamics \cite{add_exponential}, which works as follows: Given an initial guess of all agents' strategies, we find the best response of each agent to the strategies of others, i.e., $\mathbf{u}^*_j(t)\in\argmin_{\mathbf{u}_j(t)\in\mathcal{S}_j} V_j^t(\mathbf{u}_j(t),\mathbf{u}_{-j}(t))$ for all $j\in\mathcal{N}$; After all agents' strategies are updated, we then repeat the process iteratively until no agent has the incentive to change its strategy. The detailed algorithm is shown in Algorithm \ref{A1}.

\begin{algorithm}[t]
\caption{Best response dynamics to solve \eqref{value}} \label{A1}
\hspace*{0.0in} {\bf Inputs:} \\ 
\hspace*{0.02in}Agent set: $\mathcal{N}$;  \\
\hspace*{0.02in}Global system state: $x(t)$;  \\
\hspace*{0.02in}Global strategy space: $\mathcal{S}$;  \\
\hspace*{0.02in}Cumulative cost function of each agent: $V_i^t,\forall j\in\mathcal{N}$;\\
\hspace*{0.02in}System dynamics of each agent: $f_j, \forall j\in\mathcal{N}$.\\
\hspace*{0.02in} {\bf Output:} \\
\hspace*{0.02in} The ego vehicle optimal strategy  (in the sense of NE): $\mathbf{u}^*_i(t)$.\\
\hspace*{0.02in} {\bf Procedures:} 
\begin{algorithmic}[1]
\STATE Set \textit{NashCondition=False}
\STATE {\bf While} \textit{NashCondition=False} {\bf do}
\STATE \hspace*{0.2in}{\bf For} $j=1,2,...,N$ {\bf do}
\STATE \hspace*{0.3in} Find $\mathbf{u}^*_j(t)$ according to 
\hspace*{0.3in} 
\begin{equation}\nonumber    \quad\quad\quad\mathbf{u}^*_j(t)\in\argmin_{\mathbf{u}_j(t)\in\mathcal{S}_j} V_j^t(\mathbf{u}_j(t),\mathbf{u}_{-j}(t)).
\end{equation}
\STATE \hspace*{0.3in} Update $\mathbf{u}_j(t)$ using $\mathbf{u}^*_j(t)$.
\STATE \hspace*{0.2in} {\bf End for}
\STATE \hspace*{0.2in}{\bf If} $    \mathbf{u}^*_j(t)\in\argmin\limits_{\mathbf{u}'_j(t)\in\mathcal{S}_j} V_j^t(\mathbf{u}'_j(t),\mathbf{u}^*_{-j}(t))$
\STATE  \hspace*{0.3in} holds    $\forall j\in\mathcal{N}$,
\STATE  \hspace*{0.16in} {\bf Then}  Set \textit{NashCondition=True}.
\STATE \hspace*{0.2in}{\bf End if}
\STATE {\bf End while}
\end{algorithmic}
\end{algorithm}

\begin{thm}\label{l3}[Convergence of Algorithm \ref{A1}]
    Consider the game \eqref{value} and the Algorithm \ref{A1}. If the cumulative cost $V_i^t$ in \eqref{value} is designed according to \eqref{design_value}, then Algorithm \ref{A1} converges within a finite number of iterations.
\end{thm}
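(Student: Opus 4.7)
The plan is to exploit the fact that, under the cost design \eqref{design_value}, Theorem \ref{t3} already guarantees the existence of an exact potential function $F^t$ on the finite strategy space $\mathcal{S}$. Best response dynamics is then shown to be a monotone process with respect to $F^t$, and finiteness of $\mathcal{S}$ forces termination after finitely many iterations. This is essentially the classical ``finite improvement property'' argument specialized to our setting.

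First, I would invoke Theorem \ref{t3} to conclude that the game $\mathcal{G}^t$ is a finite exact potential game with potential function $F^t$ given by \eqref{total_potential}. Then I would analyze a single inner update in Algorithm \ref{A1}: when agent $j$ replaces its current strategy $\mathbf{u}_j(t)$ by a best response $\mathbf{u}_j^*(t) \in \argmin_{\mathbf{u}_j(t)\in\mathcal{S}_j} V_j^t(\mathbf{u}_j(t),\mathbf{u}_{-j}(t))$, we have by definition of a best response that
\begin{equation*}
V_j^t(\mathbf{u}_j^*(t),\mathbf{u}_{-j}(t)) \leq V_j^t(\mathbf{u}_j(t),\mathbf{u}_{-j}(t)).
\end{equation*}
Applying the potential game identity \eqref{Potential_def} to the unilateral deviation from $\mathbf{u}_j(t)$ to $\mathbf{u}_j^*(t)$ gives
\begin{equation*}
F^t(\mathbf{u}_j^*(t),\mathbf{u}_{-j}(t)) - F^t(\mathbf{u}_j(t),\mathbf{u}_{-j}(t)) = V_j^t(\mathbf{u}_j^*(t),\mathbf{u}_{-j}(t)) - V_j^t(\mathbf{u}_j(t),\mathbf{u}_{-j}(t)) \leq 0,
\end{equation*}
so $F^t$ is non-increasing along the trajectory generated by Algorithm \ref{A1}, and strictly decreases whenever some agent actually improves its own cost.

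Next I would argue termination. Because $\mathcal{S} = \mathcal{S}_1 \times \cdots \times \mathcal{S}_N$ is a finite Cartesian product of finite sets, $F^t$ takes only finitely many values on $\mathcal{S}$. Hence there can be at most finitely many iterations in which $F^t$ strictly decreases. Consequently, after some finite number of iterations, a complete pass over $j = 1, \ldots, N$ in the for-loop produces no strict decrease in any agent's cost, which means that every agent's current strategy already satisfies $\mathbf{u}^*_j(t) \in \argmin_{\mathbf{u}'_j(t)\in\mathcal{S}_j} V_j^t(\mathbf{u}'_j(t),\mathbf{u}^*_{-j}(t))$. The stopping test in Algorithm \ref{A1} then sets \textit{NashCondition=True} and the algorithm terminates.

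The main obstacle is a minor bookkeeping issue rather than a conceptual one: one must make sure that ``no agent strictly improves within a full sweep'' is equivalent to the Nash condition checked in the algorithm, and that in case of ties (multiple best responses with the same value) the algorithm does not cycle. The tie issue is resolved by observing that even if an agent switches to a different best response with equal cost, $F^t$ stays constant for that step; since only finitely many strictly decreasing steps can occur and the remaining steps do not change $F^t$, one full sweep of such ``flat'' updates leaves the joint strategy at a Nash equilibrium, triggering termination.
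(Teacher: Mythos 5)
Your proof is correct and takes essentially the same route as the paper: invoke Theorem \ref{t1}--\ref{t3} (specifically Theorem \ref{t3}) to conclude the game is a finite exact potential game, then observe that best response dynamics generates an improvement path along which the potential $F^t$ is monotone, which must terminate because $\mathcal{S}$ is finite. The only difference is that you unpack the finite improvement property argument explicitly (including the tie-breaking caveat), whereas the paper simply cites this classical result from \cite{first}.
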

\begin{proof}
    According to Theorem \eqref{t3}, if the  $V_i^t$ in \eqref{value} satisfies \eqref{design_value}, then the game   \eqref{value} is a finite potential game. Because the best response dynamics in Algorithm \ref{A1} generates an improvement path, and because every improvement path must terminate within a finite number of iteration steps  in finite potential games \cite{first}, Algorithm \ref{A1} converges to NE within a finite number of iterations.
\end{proof}

Although the convergence of the best response dynamics is guaranteed, this algorithm is generally computationally expensive, as it requires multiple and iterative optimizations at each $t$. Next we show that if the game is a potential game, then the pure-strategy NE can be derived by solving only one optimization at each $t$:
\begin{equation}\label{optimization}
    \mathbf{u}^*(t)\in\argmin_{\mathbf{u}(t)\in\mathcal{S}} F^t(\mathbf{u}(t)),
\end{equation}
where {$F^t$} is the potential function of the game \eqref{value}. The detailed algorithm that uses this potential function optimization approach to solve the game \eqref{value} is shown in Algorithm \ref{A2}.

\begin{algorithm}[t]
\caption{Potential function optimization to solve \eqref{value}} \label{A2}
\hspace*{0.0in} {\bf Input:} \\ 
\hspace*{0.02in}Agent set: $\mathcal{N}$;  \\
\hspace*{0.02in}Global system state: $x(t)$;  \\
\hspace*{0.02in}Global strategy space: $\mathcal{S}$;  \\
\hspace*{0.02in}Cumulative cost function of each agent: $V_i^t,\forall j\in\mathcal{N}$;\\
\hspace*{0.02in}System dynamics of each agent: $f_j, \forall j\in\mathcal{N}$.\\
\hspace*{0.02in} {\bf Output:} \\
\hspace*{0.02in} The ego vehicle optimal strategy (in the sense of  NE): $\mathbf{u}^*_i(t)$.\\
\hspace*{0.02in} {\bf Procedure:} 
\begin{algorithmic}[1]
\STATE Find potential function $F^t$ according to \eqref{total_potential}.
\STATE Find $\mathbf{u}^*(t)$ according to \eqref{optimization}.
\end{algorithmic}
\end{algorithm}

\begin{thm}\label{l4}
    Consider the game \eqref{value} and the Algorithm \ref{A2}. If the cumulative cost $V_i^t$ in \eqref{value} is designed according to \eqref{design_value}, then $\mathbf{u}^*(t)$ from \eqref{optimization} is a pure-strategy NE. 
\end{thm}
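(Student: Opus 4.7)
The plan is to chain together Theorem \ref{t3} with the defining property \eqref{Potential_def} of an exact potential game. First I would invoke Theorem \ref{t3} to conclude that, under the cost structure \eqref{design_value}, the game $\mathcal{G}^t$ is a finite exact potential game whose potential function is $F^t$ as defined in \eqref{total_potential}. This reduces the claim to showing that any global minimizer of $F^t$ over $\mathcal{S}$ is a pure-strategy NE of $\mathcal{G}^t$, which is a generic fact about potential games.

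Next I would let $\mathbf{u}^*(t) \in \argmin_{\mathbf{u}(t)\in\mathcal{S}} F^t(\mathbf{u}(t))$ and fix an arbitrary agent $i \in \mathcal{N}$ and an arbitrary unilateral deviation $\mathbf{u}_i(t) \in \mathcal{S}_i$. Global minimality of $\mathbf{u}^*(t)$ immediately gives
\begin{equation}\nonumber
F^t(\mathbf{u}_i^*(t),\mathbf{u}_{-i}^*(t)) \leq F^t(\mathbf{u}_i(t),\mathbf{u}_{-i}^*(t)).
\end{equation}
Applying the exact potential identity \eqref{Potential_def} to the deviation from $\mathbf{u}_i^*(t)$ to $\mathbf{u}_i(t)$ (with the strategies of the other agents frozen at $\mathbf{u}_{-i}^*(t)$) converts this inequality on $F^t$ into the same inequality on $V_i^t$, i.e.,
\begin{equation}\nonumber
V_i^t(\mathbf{u}_i^*(t),\mathbf{u}_{-i}^*(t)) \leq V_i^t(\mathbf{u}_i(t),\mathbf{u}_{-i}^*(t)).
\end{equation}
Since $i$ and $\mathbf{u}_i(t)$ were arbitrary, this is exactly the defining condition \eqref{Nash_de} of a pure-strategy NE in Definition \ref{d2}.

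There is essentially no hard step here; the result is a direct consequence of results already proven in the paper. In fact, the proof could be made even more compact by invoking Lemma \ref{l2}: a global minimizer of $F^t$ is trivially a pure-strategy NE of the identical-interest game $\Tilde{\mathcal{G}}^t$, and Lemma \ref{l2} asserts that $\mathrm{NESet}(\Tilde{\mathcal{G}}^t) = \mathrm{NESet}(\mathcal{G}^t)$. I would likely present the explicit inequality-chasing version for clarity, and mention the Lemma \ref{l2} route as an alternative. The only point worth flagging is that the theorem only guarantees $\mathbf{u}^*(t)$ is \emph{a} NE — not that every NE arises this way — because potential games may have NE that are merely local minima of $F^t$; this is not needed for Algorithm \ref{A2}, which only requires producing one NE.
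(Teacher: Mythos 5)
Your proposal is correct and follows essentially the same route as the paper: invoke Theorem \ref{t3} to establish that the game is a finite exact potential game with potential $F^t$, and then conclude that a global minimizer of $F^t$ is a pure-strategy NE, which the paper does by citing Lemma \ref{l2} directly while you additionally unfold the underlying inequality argument via \eqref{Potential_def}. Your closing remark that the theorem yields \emph{a} NE rather than all NE is a fair and accurate observation, but nothing further is needed.
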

\begin{proof}
    According to Theorem \ref{t3}, if the  $V_i^t$ in \eqref{value} satisfies \eqref{design_value}, then the game   \eqref{value} is a finite potential game. According to Lemma \ref{l2}, if \eqref{value} is a finite potential game, then the NE can be derived by solving the identical-interest game \eqref{optimization}.
\end{proof}

{
\begin{remark}
   In autonomous driving, real-time evaluation and planning are critical for AVs to generate timely response to any changes in the environment. If the game \eqref{value} is not a potential game, then solving it generally requires multiple and iterative optimizations, as shown in Algorithm \ref{A1}; these could be too time-consuming to be employed in autonomous driving. 
   
   In addition, the solutions from Algorithms \ref{A1} and \ref{A2} may be different due to the existence of multiple NE. In this case, the one from  Algorithm \ref{A2} should be preferable from a ``social cost" perspective.  It is because the solution from Algorithm \ref{A2} is not only a NE but also a global minimizer of the ``social cost" characterized by the potential function $F^t$, which usually takes into consideration all agents' cost as shown in \eqref{total_potential}. In an autonomous driving setting, it means that if a maneuver can benefit not only the AV itself (in the sense of individual optimality, $\min V_i^t$), but also the surrounding vehicles and pedestrians (in the sense of social optimality, $\min F^t$), then it should be preferable to the ones that are only individually optimal. Indeed the AV is expected to be considerate to other agents on the road, instead of caring for its self-interest only. 
\end{remark}
}

\section{Continuous potential games }\label{sec_continuous}
In this section, we consider the multi-player game \eqref{value} {with} continuous strategy spaces  $\mathcal{S}_i$. As a continuous action space contains infinitely many elements, the finite potential game results investigated in Section \ref{sec:finite} may not always hold. Therefore, in this section, we develop results for  continuous potential games.

{\subsection{Preliminaries}}
Let us first define continuous potential games.
\begin{defn} [Continuous Exact Potential Game] \label{d5}
The game \eqref{value} %with everywhere differentiable cumulative cost $V_i^t(\mathbf{u}_i,\mathbf{u}_{-i})$
is a continuous exact potential game if and only if $\mathcal{S}_i\subset \mathbb{R}^{m_i}$ is a connected set and a potential function $F^t:\mathcal{S}\rightarrow\mathbb{R}$ exists such that, $\forall i\in\mathcal{N}$, $\forall \mathbf{u}_i(t)\in\mathcal{S}_i,  \forall \mathbf{u}_{-i}(t)\in\mathcal{S}_{-i}$,
\begin{equation}\label{defi_c}
\begin{split}
   & \frac{\partial V_i^t(\mathbf{u}_i(t),\mathbf{u}_{-i}(t))}{\partial \mathbf{u}_i(t)}=\frac{\partial F^t(\mathbf{u}_i(t),\mathbf{u}_{-i}(t))}{\partial \mathbf{u}_i(t)}, 
\end{split}
\end{equation}
where  $V_i^t$ and $F^t$ are  everywhere differentiable  on an open superset of $\mathcal{S}$, and $\mathcal{S}_i$ is  nonempty $\forall i\in\mathcal{N}$. 

Note that the above definition is slightly different from the one in \cite{potential_book}. Specifically, $\mathbf{u}_i$ is defined as a scalar in \cite{potential_book}, while it is a vector in Definition \ref{d5}. {With this generalization, to ensure that the theoretical analysis still holds, we include the proofs of each lemma in this section in the Appendix.}
\end{defn}

Proposition \ref{p1} shows the equivalence of the finite and continuous potential games under certain conditions. 
\begin{prop}\label{p1}
Assume that  $V_i^t$ and $F^t$ are  everywhere differentiable  on an open superset of $\mathcal{S}$, and $\mathcal{S}_i$ is  nonempty $\forall i\in\mathcal{N}$. If $\mathcal{S}_i\subset \mathbb{R}^{m_i}$ is a compact (i.e., bounded and closed) and connected set  $\forall i\in\mathcal{N}$, then   \eqref{defi_c} holds if and only if  \eqref{Potential_def} holds.
\end{prop}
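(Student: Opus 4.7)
The plan is to prove the two implications separately, centered on the auxiliary function $\phi_i(\mathbf{u}) := V_i^t(\mathbf{u}) - F^t(\mathbf{u})$. In terms of $\phi_i$, equation \eqref{defi_c} asserts $\nabla_{\mathbf{u}_i}\phi_i \equiv 0$ on the open superset $\tilde{\mathcal{S}}$ of $\mathcal{S}$, while equation \eqref{Potential_def} asserts that, for each fixed $\mathbf{u}_{-i}\in\mathcal{S}_{-i}$, the map $\mathbf{u}_i \mapsto \phi_i(\mathbf{u}_i,\mathbf{u}_{-i})$ is constant on $\mathcal{S}_i$. Slice by slice in $\mathbf{u}_{-i}$, the equivalence therefore reduces to the classical fact that on a suitable connected domain, ``zero gradient'' is equivalent to ``constant.''

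For the forward direction, I would fix $\mathbf{u}_{-i}\in\mathcal{S}_{-i}$ and consider the open slice $\tilde{U} := \{\mathbf{u}_i \in \mathbb{R}^{m_i} : (\mathbf{u}_i,\mathbf{u}_{-i}) \in \tilde{\mathcal{S}}\}$, which contains $\mathcal{S}_i$. Since $\nabla_{\mathbf{u}_i}\phi_i = 0$ on the open set $\tilde{U}\subset\mathbb{R}^{m_i}$, the map $\phi_i(\cdot,\mathbf{u}_{-i})$ is locally constant on $\tilde{U}$ (apply the fundamental theorem of calculus along any line segment inside a small ball around a point), hence constant on every connected component of $\tilde{U}$. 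Because $\mathcal{S}_i$ is a connected subset of $\tilde{U}$, it must lie inside a single connected component, so $\phi_i(\cdot,\mathbf{u}_{-i})$ takes a single value on $\mathcal{S}_i$, and rearranging yields \eqref{Potential_def}.

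For the reverse direction, starting from \eqref{Potential_def}, the map $\phi_i(\cdot,\mathbf{u}_{-i})$ is constant on $\mathcal{S}_i$ for each $\mathbf{u}_{-i}$. At any point in the relative interior of $\mathcal{S}_i$ (with respect to $\mathbb{R}^{m_i}$), a locally constant function has zero gradient by definition. I would then extend this to boundary points of $\mathcal{S}_i$ by invoking continuity of $\nabla_{\mathbf{u}_i}\phi_i$ on $\tilde{\mathcal{S}}$ and using the closedness (from compactness) of $\mathcal{S}_i$ to pass to the limit from interior points.

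The principal technical obstacle is precisely this reverse direction: the hypothesis \eqref{Potential_def} constrains $\phi_i$ only on $\mathcal{S}_i$, whereas \eqref{defi_c} is a derivative condition whose meaning depends on approach from a full $\mathbb{R}^{m_i}$-neighbourhood. At interior points the two agree automatically, but at boundary points one genuinely needs compactness together with continuity of the derivative to close the argument. This is why both compactness and connectedness appear in the hypothesis, and it is also why the forward direction, which only needs openness of $\tilde{\mathcal{S}}$ and connectedness of $\mathcal{S}_i$, is the easier half.
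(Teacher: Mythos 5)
Your reduction to $\phi_i := V_i^t - F^t$ and the slogan ``zero gradient iff constant on a connected set'' is the natural way to view the statement, but both halves of your argument lean on hypotheses the proposition does not grant. In the direction \eqref{defi_c} $\Rightarrow$ \eqref{Potential_def}, you assert that $\nabla_{\mathbf{u}_i}\phi_i \equiv 0$ on the \emph{open superset} $\tilde{\mathcal{S}}$ and then argue local constancy on the open slice $\tilde{U}$. That misreads Definition \ref{d5}: differentiability is assumed on an open superset, but the gradient identity \eqref{defi_c} is quantified only over $\mathbf{u}_i(t)\in\mathcal{S}_i$, $\mathbf{u}_{-i}(t)\in\mathcal{S}_{-i}$, i.e., it holds only on $\mathcal{S}$. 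If $\mathcal{S}_i$ has empty interior (nothing in the proposition forbids this), local constancy on $\tilde{U}$ is unavailable. The paper instead joins $\mathbf{u}_i$ to $\mathbf{u}'_i$ by a continuous curve $\gamma$ lying \emph{inside} $\mathcal{S}_i$ (this is where connectedness enters) and applies the gradient theorem for line integrals: the two integrands agree at every point of $\gamma\subset\mathcal{S}$, and the integrals evaluate to the respective differences in \eqref{Potential_def}. (Strictly this needs path-connectedness of $\mathcal{S}_i$, a gap the paper shares, but it never uses the gradient identity off $\mathcal{S}$.)

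In the direction \eqref{Potential_def} $\Rightarrow$ \eqref{defi_c}, your plan is to obtain zero gradient at interior points of $\mathcal{S}_i$ and then reach boundary points ``by continuity of $\nabla_{\mathbf{u}_i}\phi_i$.'' Two problems: only everywhere differentiability is assumed, not $C^1$, so continuity of the gradient is not available; and if $\mathcal{S}_i$ has empty interior there are no interior points from which to take limits. The paper's argument works pointwise at every $\mathbf{u}_i\in\mathcal{S}_i$ with no regularity beyond differentiability: for a unit vector $\mathbf{r}$, \eqref{Potential_def} makes the difference quotients of $V_i^t$ and $F^t$ along $h\mathbf{r}$ identical for each admissible $h$, so their limits --- the directional derivatives $\frac{\partial V_i^t}{\partial\mathbf{u}_i}\cdot\mathbf{r}$ and $\frac{\partial F^t}{\partial\mathbf{u}_i}\cdot\mathbf{r}$ --- coincide, and since $\mathbf{r}$ is arbitrary the gradients coincide. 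You should restructure both halves around identities that only ever evaluate $\phi_i$ and its increments at points of $\mathcal{S}$ itself.
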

\begin{proof}
    See Appendix A. 
\end{proof}

%Lemmas \ref{l4} and \ref{l5} establishes the basic properties of continuous potential games. Because the game considered here is a generalized version of the ones in \cite{potential_book}, to ensure correctness, we include the proofs of the lemmas here. 

\begin{lemma}\label{l4}[Equivalence of Nash equilibrium sets]
If the game \eqref{value} is a continuous exact potential game and $F^t$ is a potential function, then the set of pure-strategy Nash equilibria of \eqref{value} coincides with the set of pure-strategy Nash equilibria for the identical-interest game with all agents' cumulative cost equal to the potential function $F^t$.
\end{lemma}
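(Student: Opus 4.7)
The plan is to exploit the equivalence between the continuous gradient characterization \eqref{defi_c} and a difference-form characterization, and then observe that adding a term independent of $\mathbf{u}_i$ to $V_i^t$ does not change the arg-min over $\mathbf{u}_i$.

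First, I would establish the integrated identity
\[
V_i^t(\mathbf{u}_i,\mathbf{u}_{-i}) - V_i^t(\mathbf{u}'_i,\mathbf{u}_{-i}) = F^t(\mathbf{u}_i,\mathbf{u}_{-i}) - F^t(\mathbf{u}'_i,\mathbf{u}_{-i}),
\]
for all $\mathbf{u}_i,\mathbf{u}'_i \in \mathcal{S}_i$ and all $\mathbf{u}_{-i} \in \mathcal{S}_{-i}$. This is essentially the content of Proposition \ref{p1} in one direction: on a connected $\mathcal{S}_i$, two differentiable functions whose gradients with respect to $\mathbf{u}_i$ agree can differ only by a quantity depending on $\mathbf{u}_{-i}$. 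The path-integral argument (the fundamental theorem of calculus for line integrals along a piecewise-smooth curve in $\mathcal{S}_i$ with $\mathbf{u}_{-i}$ held fixed) is what makes this go through.

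Second, from this identity it follows immediately that, for each fixed $\mathbf{u}_{-i}$, the functions $V_i^t(\cdot,\mathbf{u}_{-i})$ and $F^t(\cdot,\mathbf{u}_{-i})$ differ by a constant in $\mathbf{u}_i$, and hence share the same set of minimizers on $\mathcal{S}_i$. I would then verify the two inclusions directly from the NE definition. Suppose $\mathbf{u}^* = \{\mathbf{u}_i^*,\mathbf{u}_{-i}^*\}$ is a pure-strategy NE of $\mathcal{G}^t$; then $V_i^t(\mathbf{u}_i^*,\mathbf{u}_{-i}^*) - V_i^t(\mathbf{u}_i,\mathbf{u}_{-i}^*) \leq 0$ for each $i \in \mathcal{N}$ and every $\mathbf{u}_i \in \mathcal{S}_i$. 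Substituting the difference identity yields $F^t(\mathbf{u}_i^*,\mathbf{u}_{-i}^*) - F^t(\mathbf{u}_i,\mathbf{u}_{-i}^*) \leq 0$, which is precisely the NE condition for $\Tilde{\mathcal{G}}^t$. The reverse inclusion follows by running the same substitution in the other direction, since the identity is an equality (not merely an inequality) between differences.

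The main obstacle is the first step, i.e., deriving the global difference identity from the pointwise gradient condition. This relies on the connectedness (and sufficient regularity, as in Proposition \ref{p1}) of $\mathcal{S}_i$, without which gradient agreement need not upgrade to agreement of differences across the whole strategy set. Once this identity is in place, the equivalence of NE sets reduces to a one-line substitution and mirrors exactly the reasoning behind the finite-game counterpart, Lemma \ref{l2}.
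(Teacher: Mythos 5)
Your proposal is correct and follows essentially the same route as the paper's own proof in Appendix~B: the paper likewise observes that the gradient condition \eqref{defi_c} forces $V_i^t(\mathbf{u}_i,\mathbf{u}_{-i})-F^t(\mathbf{u}_i,\mathbf{u}_{-i})=C(\mathbf{u}_{-i})$ on the connected set $\mathcal{S}_i$ (your integrated difference identity, i.e., the sufficiency direction of Proposition~\ref{p1}), and then concludes that the per-player arg-min sets, and hence the NE sets, coincide in both directions. Your write-up is somewhat more explicit about why connectedness is needed to upgrade gradient agreement to a global constant difference, but the argument is the same.
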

\begin{proof}
    See Appendix B. 
\end{proof}

\begin{lemma}\label{l5}[Existence of pure-strategy Nash equilibria]  If the game \eqref{value} is a continuous exact potential game with compact  $\mathcal{S}$, then the game {has}  at least one pure-strategy Nash equilibrium. Moreover, if the potential function $F^t$ is strictly convex, then the pure-strategy Nash equilibrium is unique. 
\end{lemma}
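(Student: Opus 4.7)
The plan is to reduce the problem to the identical-interest game via Lemma \ref{l4} and then invoke standard optimization arguments on the potential function $F^t$.

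For existence, I would first note that Lemma \ref{l4} says the NE set of the game $\mathcal{G}^t$ coincides with the NE set of the identical-interest game in which every agent's cost equals $F^t$. Since $F^t$ is assumed everywhere differentiable on an open superset of $\mathcal{S}$ (Definition \ref{d5}), it is in particular continuous on the compact set $\mathcal{S}$. By Weierstrass' theorem, $F^t$ attains a global minimum at some $\mathbf{u}^\star\in\mathcal{S}$. I would then observe that any global minimizer of $F^t$ over $\mathcal{S}=\mathcal{S}_1\times\cdots\times\mathcal{S}_N$ is automatically a pure-strategy NE of the identical-interest game: fixing $\mathbf{u}_{-i}^\star$, one cannot decrease $F^t$ by varying $\mathbf{u}_i$ alone (otherwise $\mathbf{u}^\star$ would not be a global minimum). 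Applying Lemma \ref{l4} back, $\mathbf{u}^\star$ is a NE of $\mathcal{G}^t$.

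For uniqueness under strict convexity, I would first implicitly require each $\mathcal{S}_i$ to be convex (so that strict convexity of $F^t$ on the product $\mathcal{S}$ is meaningful; this is already covered by the connected-and-compact setting used in the paper's examples). Strict convexity of $F^t$ on the convex compact set $\mathcal{S}$ guarantees a unique global minimizer $\mathbf{u}^\star$. The remaining step is to show that every NE of $\mathcal{G}^t$ (equivalently, of the identical-interest game) coincides with $\mathbf{u}^\star$. I would use first-order variational inequalities: at any NE $\mathbf{u}^\dagger$, optimality of $\mathbf{u}_i^\dagger$ in $\argmin_{\mathbf{u}_i\in\mathcal{S}_i} F^t(\mathbf{u}_i,\mathbf{u}_{-i}^\dagger)$ yields
\begin{equation*}
\Bigl\langle \tfrac{\partial F^t(\mathbf{u}^\dagger)}{\partial \mathbf{u}_i},\, \mathbf{u}_i-\mathbf{u}_i^\dagger\Bigr\rangle \geq 0,\quad \forall \mathbf{u}_i\in\mathcal{S}_i,\ \forall i\in\mathcal{N}.
\end{equation*}
Summing these over $i$ and exploiting the Cartesian product structure of $\mathcal{S}$ gives $\langle \nabla F^t(\mathbf{u}^\dagger), \mathbf{u}-\mathbf{u}^\dagger\rangle \geq 0$ for all $\mathbf{u}\in\mathcal{S}$, which is precisely the first-order optimality condition for minimizing a convex function over a convex set. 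Hence $\mathbf{u}^\dagger$ is a global minimizer of $F^t$, and by strict convexity $\mathbf{u}^\dagger=\mathbf{u}^\star$.

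The main obstacle is the uniqueness half rather than existence: one has to move from coordinate-wise (NE) optimality to joint optimality of $F^t$. The cleanest route is the variational-inequality argument just sketched, which crucially uses both the Cartesian product structure of $\mathcal{S}$ and convexity of $F^t$; strict convexity then upgrades "any NE is a global minimizer" to "the unique global minimizer is the unique NE." An alternative, should a more self-contained argument be desired, is a contradiction: if two distinct NE $\mathbf{u}^\dagger,\mathbf{u}^{\dagger\dagger}$ existed, then strict convexity applied along the segment joining them, combined with the NE inequalities, would produce a contradiction.
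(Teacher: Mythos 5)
Your proposal is correct, and the existence half follows the paper's own route exactly: Weierstrass on the compact set $\mathcal{S}$ gives a global minimizer of $F^t$, a global minimizer is trivially a NE of the identical-interest game, and Lemma \ref{l4} transfers this back to $\mathcal{G}^t$. Where you genuinely diverge is on uniqueness. The paper's proof is a one-liner: strict convexity gives a unique minimum of $F^t$, ``and as such, the game has a unique NE.'' This silently identifies the NE set with the set of global minimizers of $F^t$, but Lemma \ref{l4} only gives that the NE set equals the set of \emph{coordinate-wise} minimizers of $F^t$, which in general strictly contains the global minimizers; the step from ``every NE is a coordinate-wise minimizer'' to ``every NE is a global minimizer'' is exactly the nontrivial content of the uniqueness claim, and the paper does not supply it. Your variational-inequality argument --- summing the per-coordinate first-order conditions $\bigl\langle \partial F^t(\mathbf{u}^\dagger)/\partial \mathbf{u}_i,\, \mathbf{u}_i-\mathbf{u}_i^\dagger\bigr\rangle \geq 0$ over $i$ and using the Cartesian product structure of $\mathcal{S}$ to recover the global first-order optimality condition for a convex function --- closes this gap cleanly. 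You are also right to flag that this requires each $\mathcal{S}_i$ to be convex, an assumption the lemma statement omits (it only asks for compactness, and Definition \ref{d5} only for connectedness) but which is needed both for strict convexity of $F^t$ on $\mathcal{S}$ to be meaningful and for the first-order conditions to take the variational-inequality form. In short, your proof is a strictly more complete version of the paper's argument; what it buys is a justified uniqueness claim at the cost of one extra (and arguably unavoidable) convexity hypothesis on the strategy sets.
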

\begin{proof}
    See Appendix C. 
\end{proof}

{\subsection{Constructing continuous potential games for autonomous driving}}

In the next theorems, we show that the cost function shaping approaches developed in Section \ref{sec:finite} also work in continuous potential games. Theorem \ref{t4} considers self-centered driving objectives, Theorem \ref{t5} considers pairwise interactions, and Theorem \ref{t6} considers a mix of these two types of costs.

\begin{thm}\label{t4}
Assume that the cumulative cost in \eqref{value} are everywhere differentiable on an open superset of $\mathcal{S}$, and 
\begin{equation}
    V_i^t(\mathbf{u}_i(t),\mathbf{u}_{-i}(t))=V_i^{t,self}(\mathbf{u}_i(t)),
\end{equation}
where $V_i^{t,self}(\mathbf{u}_i(t))$ is {dependent} solely {on} $\mathbf{u}_i(t)$. Then the game \eqref{value} is a continuous potential game with the potential function,
\begin{equation}\label{p_c_1}
    P^t(\mathbf{u}(t))=\sum_{i\in\mathcal{N}}V_i^{t,self}(\mathbf{u}_i(t)).
\end{equation}
\end{thm}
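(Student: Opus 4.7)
The plan is to verify the defining condition \eqref{defi_c} directly, in close analogy with the proof of Theorem \ref{t1}, but with the role of finite-difference increments in strategy replaced by partial derivatives (gradients) with respect to the vector-valued strategy $\mathbf{u}_i(t) \in \mathcal{S}_i \subset \mathbb{R}^{T m_i}$. Since the statement is essentially the continuous mirror of Theorem \ref{t1}, the argument will reduce to observing that in the proposed $P^t$, the gradient with respect to $\mathbf{u}_i(t)$ kills every summand except the $i$-th one.

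First I would fix an arbitrary agent $i \in \mathcal{N}$, an interior point $\mathbf{u}(t) \in \mathcal{S}$ (interior in an open superset of $\mathcal{S}$ where differentiability is granted), and compute the left-hand side of \eqref{defi_c}: because $V_i^t(\mathbf{u}_i(t),\mathbf{u}_{-i}(t)) = V_i^{t,\mathrm{self}}(\mathbf{u}_i(t))$ has no dependence on $\mathbf{u}_{-i}(t)$, its partial gradient satisfies
\begin{equation*}
\frac{\partial V_i^t(\mathbf{u}_i(t),\mathbf{u}_{-i}(t))}{\partial \mathbf{u}_i(t)} = \frac{\partial V_i^{t,\mathrm{self}}(\mathbf{u}_i(t))}{\partial \mathbf{u}_i(t)}.
\end{equation*}
Next I would compute the right-hand side using \eqref{p_c_1}: since $P^t(\mathbf{u}(t)) = \sum_{j \in \mathcal{N}} V_j^{t,\mathrm{self}}(\mathbf{u}_j(t))$ and every term with $j \neq i$ is independent of $\mathbf{u}_i(t)$, only the $j = i$ summand contributes to the gradient, yielding the same expression. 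Equality of the two sides then holds on all of $\mathcal{S}$, which is exactly \eqref{defi_c}.

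Finally, I would briefly check the regularity hypotheses embedded in Definition \ref{d5}: differentiability of $P^t$ on an open superset of $\mathcal{S}$ follows immediately from the assumed differentiability of each $V_j^t$ (equivalently, each $V_j^{t,\mathrm{self}}$), because finite sums of differentiable functions are differentiable, and connectedness/nonemptiness of each $\mathcal{S}_i$ is inherited from the problem data. Hence $P^t$ qualifies as a potential function and the game is a continuous exact potential game. I do not anticipate any genuine obstacle in this proof; the only subtlety worth flagging is the shift from scalar strategies in the textbook statement of Definition \ref{d5} to vector strategies, but this is already addressed by interpreting the partial derivative as a gradient in $\mathbb{R}^{T m_i}$ and was noted explicitly in the discussion surrounding Proposition \ref{p1}.
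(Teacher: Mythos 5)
Your proposal is correct and follows essentially the same route as the paper: the paper's proof also verifies \eqref{defi_c} directly by noting that $\partial P^t/\partial \mathbf{u}_i(t) = \partial V_i^{t,self}/\partial \mathbf{u}_i(t) = \partial V_i^t/\partial \mathbf{u}_i(t)$, since every summand with $j \neq i$ is annihilated by the gradient with respect to $\mathbf{u}_i(t)$. Your additional remarks on differentiability of the sum and the vector-valued interpretation of the partial derivative are consistent with the paper's treatment and add nothing contradictory.
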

\begin{proof}
The following equation holds.
\begin{equation}
\begin{split}
    \frac{\partial P^t(\mathbf{u}_i(t),\mathbf{u}_{-i}(t))}{\partial \mathbf{u}_i(t)}&=\frac{\partial \sum_{j\in\mathcal{N}}V_j^{t,self}(\mathbf{u}_j(t))}{\partial \mathbf{u}_i(t)}\\
    &=\frac{\partial V_i^{t,self}(\mathbf{u}_i(t))}{\partial \mathbf{u}_i(t)}\\
   &=\frac{\partial V_i^t(\mathbf{u}_i(t),\mathbf{u}_{-i}(t))}{\partial \mathbf{u}_i(t)}.
    \end{split}
\end{equation}
As such, according to Definition \ref{d5}, the game is a continuous  potential game, and $P^t$ is a potential function.
\end{proof}

\begin{thm}\label{t5}
Assume that the cumulative cost in \eqref{value} are everywhere  differentiable on an open superset of $\mathcal{S}$, and
\begin{equation}
    V_i^t(\mathbf{u}_i(t),\mathbf{u}_{-i}(t))=\sum_{j\in\mathcal{N},j\neq i} V_{ij}^t(\mathbf{u}_i(t),\mathbf{u}_{j}(t)),
\end{equation}
where $V_{ij}^t(\mathbf{u}_i(t),\mathbf{u}_{j}(t))=V_{ji}^t(\mathbf{u}_j(t),\mathbf{u}_{i}(t)), \forall i,j\in\mathcal{N}, i\neq j, \forall \mathbf{u}_i(t)\in \mathcal{S}_i,\forall \mathbf{u}_{j}(t)\in \mathcal{S}_j$. 
Then the game \eqref{value} is a continuous  potential game with the potential function, 
\begin{equation}\label{p_c_2}
    G^t(\mathbf{u}(t))=\sum_{i\in \mathcal{N}}\sum_{j\in\mathcal{N},j< i} V_{ij}^t(\mathbf{u}_i(t),\mathbf{u}_{j}(t)).
\end{equation}
\end{thm}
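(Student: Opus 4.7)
The plan is to verify Definition \ref{d5} directly by computing $\partial G^t/\partial \mathbf{u}_i(t)$ and showing it equals $\partial V_i^t/\partial \mathbf{u}_i(t)$ for every $i \in \mathcal{N}$. The smoothness and nonemptiness/connectedness hypotheses in Definition \ref{d5} carry over from the theorem's assumptions, so the only substantive step is the gradient identity. This parallels the proof of Theorem \ref{t2}, but replaces the finite-difference argument with a partial-derivative one.

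First I would expand $V_i^t$ and differentiate term by term,
\begin{equation}\nonumber
\frac{\partial V_i^t(\mathbf{u}_i(t),\mathbf{u}_{-i}(t))}{\partial \mathbf{u}_i(t)} = \sum_{j\in\mathcal{N},j\neq i}\frac{\partial V_{ij}^t(\mathbf{u}_i(t),\mathbf{u}_j(t))}{\partial \mathbf{u}_i(t)}.
\end{equation}
Next I would split the double sum defining $G^t$ according to which summands actually depend on $\mathbf{u}_i(t)$. A pair $(k,\ell)$ with $\ell < k$ contributes a term that depends on $\mathbf{u}_i(t)$ only if $k=i$ (with some $\ell < i$) or if $\ell = i$ (with some $k > i$). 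Therefore
\begin{equation}\nonumber
\frac{\partial G^t(\mathbf{u}(t))}{\partial \mathbf{u}_i(t)} = \sum_{j<i}\frac{\partial V_{ij}^t(\mathbf{u}_i(t),\mathbf{u}_j(t))}{\partial \mathbf{u}_i(t)} + \sum_{k>i}\frac{\partial V_{ki}^t(\mathbf{u}_k(t),\mathbf{u}_i(t))}{\partial \mathbf{u}_i(t)}.
\end{equation}

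The main (and only mildly delicate) step is to invoke the pairwise symmetry $V_{ki}^t(\mathbf{u}_k,\mathbf{u}_i)=V_{ik}^t(\mathbf{u}_i,\mathbf{u}_k)$ to rewrite the second sum. Care is needed because $\mathbf{u}_i$ is a vector and appears as the second argument of $V_{ki}^t$ but as the first argument of $V_{ik}^t$; the symmetry hypothesis, read as an equality of functions on $\mathcal{S}_i \times \mathcal{S}_k$, implies that the gradient of $V_{ki}^t$ with respect to its second argument evaluated at $(\mathbf{u}_k,\mathbf{u}_i)$ equals the gradient of $V_{ik}^t$ with respect to its first argument at $(\mathbf{u}_i,\mathbf{u}_k)$. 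Substituting, the $k>i$ sum becomes $\sum_{k>i}\partial V_{ik}^t(\mathbf{u}_i(t),\mathbf{u}_k(t))/\partial \mathbf{u}_i(t)$.

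Finally I would combine the two sums into the single sum $\sum_{j\neq i}\partial V_{ij}^t/\partial \mathbf{u}_i(t)$, which matches the earlier expression for $\partial V_i^t/\partial \mathbf{u}_i(t)$. Since $i$ was arbitrary, Definition \ref{d5} is satisfied and $G^t$ is a potential function, completing the proof. I do not anticipate any real obstacle beyond bookkeeping; the only place to be careful is the symmetry-of-arguments rewrite described above.
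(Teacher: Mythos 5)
Your proposal is correct and follows essentially the same route as the paper's proof: differentiate $G^t$ with respect to $\mathbf{u}_i(t)$, use the pairwise symmetry $V_{ij}^t(\mathbf{u}_i,\mathbf{u}_j)=V_{ji}^t(\mathbf{u}_j,\mathbf{u}_i)$ to reduce the double sum to $\sum_{j\neq i}\partial V_{ij}^t/\partial\mathbf{u}_i$, and match this with $\partial V_i^t/\partial\mathbf{u}_i$. Your version merely spells out the split of the double sum and the argument-swapping step that the paper's one-line computation leaves implicit.
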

\begin{proof}
Because $V_{ij}^t(\mathbf{u}_i(t),\mathbf{u}_{j}(t))=V_{ji}^t(\mathbf{u}_j(t),\mathbf{u}_{i}(t)), \forall i,j\in\mathcal{N}, i\neq j$, the following equation holds. 
\begin{equation}
\begin{split}
   & \frac{\partial G^t(\mathbf{u}_i(t),\mathbf{u}_{-i}(t))}{\partial \mathbf{u}_i(t)}=\frac{\partial \left(\sum_{i\in \mathcal{N}}\sum_{j\in\mathcal{N},j< i} V_{ij}^t(\mathbf{u}_i(t),\mathbf{u}_{j}(t)\right)}{\partial \mathbf{u}_i(t)}\\
    &=\frac{\partial \left(\sum_{j\in\mathcal{N},j\neq i} V_{ij}^t(\mathbf{u}_i(t),\mathbf{u}_{j}(t)\right)}{\partial \mathbf{u}_i(t)}=\frac{\partial V_i^t(\mathbf{u}_i(t),\mathbf{u}_{-i}(t))}{\partial \mathbf{u}_i(t)}.
    \end{split}
\end{equation} 
As such, the game is a continuous potential game with $G^t$ as the potential function.
\end{proof}

\begin{thm}\label{t6}
Assume that the cumulative cost in \eqref{value} are everywhere differentiable on an open superset of $\mathcal{S}$, and
\begin{equation}\label{design_value_c}
\begin{split}
    &V_i^t(\mathbf{u}_i(t),\mathbf{u}_{-i}(t))\\
    &=\alpha V_i^{t,self}(\mathbf{u}_i(t))+\beta\sum_{j\in\mathcal{N},j\neq i} V_{ij}^t(\mathbf{u}_i(t),\mathbf{u}_{j}(t)),
    \end{split}
\end{equation}
where $V_i^{t,self}(\mathbf{u}_i(t))$ and $V_{ij}^t(\mathbf{u}_i(t),\mathbf{u}_{j}(t))$ satisfy the requirements in Theorems \ref{t4} and \ref{t5}, respectively. Then the game \eqref{value} is a continuous  potential game with the potential function, 
\begin{equation}\label{potentialfunction_c}
\begin{split}
     F^t(\mathbf{u}(t))&=\alpha P^t(\mathbf{u}(t))+\beta G^t(\mathbf{u}(t))\\
    & ={\alpha} \sum_{i\in\mathcal{N}}V_i^{t,self}(\mathbf{u}_i(t))\\
    &\quad+{\beta}\sum_{i\in \mathcal{N}}\sum_{j\in\mathcal{N},j< i} V_{ij}^t(\mathbf{u}_i(t),\mathbf{u}_{j}(t)).
\end{split}
\end{equation}
\end{thm}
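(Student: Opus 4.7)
The plan is to mirror the proof of Theorem \ref{t3}, but now leveraging the partial-derivative characterization of continuous potential games from Definition \ref{d5} instead of the finite difference characterization. Since Theorem \ref{t4} already shows that $P^t$ is a potential function for the self-centered component $V_i^{t,self}$, and Theorem \ref{t5} already shows that $G^t$ is a potential function for the pairwise component $\sum_{j\neq i}V_{ij}^t$, the whole argument reduces to invoking linearity of the partial derivative $\partial/\partial \mathbf{u}_i(t)$ applied to the linear combination in \eqref{design_value_c}.

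Concretely, first I would note that the differentiability hypotheses of Theorems \ref{t4} and \ref{t5} are inherited directly from the hypotheses of Theorem \ref{t6}, so both $P^t$ and $G^t$, and hence $F^t = \alpha P^t + \beta G^t$, are everywhere differentiable on an open superset of $\mathcal{S}$. Next, I would compute
\begin{equation}
\frac{\partial V_i^t(\mathbf{u}_i(t),\mathbf{u}_{-i}(t))}{\partial \mathbf{u}_i(t)}
= \alpha\,\frac{\partial V_i^{t,self}(\mathbf{u}_i(t))}{\partial \mathbf{u}_i(t)}
+ \beta\,\frac{\partial \sum_{j\neq i}V_{ij}^t(\mathbf{u}_i(t),\mathbf{u}_j(t))}{\partial \mathbf{u}_i(t)},
\end{equation}
and then substitute the conclusions of Theorems \ref{t4} and \ref{t5}, which identify these two partial derivatives with $\partial P^t/\partial \mathbf{u}_i(t)$ and $\partial G^t/\partial \mathbf{u}_i(t)$, respectively. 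Combining these and pulling $\alpha,\beta$ back inside the derivative gives $\partial V_i^t/\partial \mathbf{u}_i(t) = \partial F^t/\partial \mathbf{u}_i(t)$, which is exactly condition \eqref{defi_c} of Definition \ref{d5}.

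Honestly, I do not expect a real obstacle here; the theorem is a direct linear-combination corollary of Theorems \ref{t4} and \ref{t5}, analogous to how Theorem \ref{t3} followed from Theorems \ref{t1} and \ref{t2}. The only minor point worth flagging in the write-up is making explicit that the partial derivative $\partial/\partial\mathbf{u}_i(t)$ is linear when viewed as an operator from differentiable functions on $\mathcal{S}$ to $\mathbb{R}^{m_i T}$-valued maps, which justifies splitting the derivative across the $\alpha$ and $\beta$ terms. Given this, the proof can be reported in essentially one line, in the same style as the proof of Theorem \ref{t3}.
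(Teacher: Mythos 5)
Your proposal is correct and matches the paper's proof, which simply states that the result follows by combining Theorems \ref{t4} and \ref{t5}; your explicit use of linearity of $\partial/\partial\mathbf{u}_i(t)$ is exactly the computation that justification implicitly relies on. No gaps.
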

\begin{proof}
This theorem is straightforward by combining Theorems \ref{t4} and \ref{t5}.
\end{proof}

{\subsection{Solving the constructed continuous potential games}}
As the action space contains infinitely many elements in continuous potential games, the convergence of the best response dynamics may not be achieved within {a} finite {number of iterations}. To address this issue, we define {an} $\epsilon$-Nash equilibrium.

\begin{defn} [$\epsilon$-Nash equilibrium \cite{game_book}] \label{d2}
An $N$-tuple of strategies $\{\mathbf{u}_1^*,\mathbf{u}_2^*,...,\mathbf{u}_N^*\}$ is  an $\epsilon$-Nash equilibrium for an $N$-player game if and only if $\exists \epsilon \in \mathbb{R}_{+}$ such that, $\forall i\in \mathcal{N}$, $\forall \mathbf{u}_i(t)\in\mathcal{S}_i$,   
\begin{equation}\label{eNash_de}
 V_i^t(\mathbf{u}_i^*(t),\mathbf{u}_{-i}^*(t))\leq V_i^t(\mathbf{u}_i(t),\mathbf{u}_{-i}^*(t))+\epsilon.
\end{equation}
\end{defn}

{With the $\epsilon$-Nash equilibrium, we can use the best response dynamics, i.e., Algorithm \ref{A1}, or the potential function optimization, i.e., Algorithm \ref{A2}, to solve the game, if appropriate changes are made. Specifically,  to use Algorithm \ref{A1} and to ensure the convergence within a finite number of iteration,  the $NashCondition$ in Procedure 7 should be replaced by ``$    V_j^t(\mathbf{u}^*_j(t),\mathbf{u}^*_{-j}(t))\leq V_j^t(\mathbf{u}'_j(t),\mathbf{u}^*_{-j}(t))+\epsilon$ holds $\forall \mathbf{u}'_j\in \mathcal{S}_j, \forall j\in\mathcal{N}$"; and the update rule in Procedure 5 should be replaced by ``Update $\mathbf{u}_j(t)$ using $\mathbf{u}^*_j(t)$ if $V_j^t(\mathbf{u}^*_j(t),\mathbf{u}_{-j}(t))\leq V_j^t(\mathbf{u}_j(t),\mathbf{u}_{-j}(t))-\epsilon$". In Algorithm \ref{A2}, the potential function $F^t$ should be replaced by \eqref{potentialfunction_c}.}

Next lemma shows the guaranteed convergence of the best response algorithm. 

\begin{thm}\label{l6}[Convergence to $\epsilon$-Nash equilibrium] Consider the game \eqref{value} and the Algorithm \ref{A1}. If the cumulative cost $V_i^t$ in  \eqref{value} satisfies \eqref{design_value_c} with   $\mathcal{S}$ being a connected and compact set, and if Algorithm \ref{A1} is  modified according to the $\epsilon$-Nash equilibrium condition  as specified above,  then Algorithm \ref{A1} converges within a finite number of iterations.
\end{thm}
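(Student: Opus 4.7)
The plan is to show that the modified best response dynamics is an $\epsilon$-improvement path with respect to the potential function $F^t$, and that such a path must terminate after a bounded number of iterations because $F^t$ is bounded on the compact domain.

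First, I would invoke Theorem \ref{t6}: since $V_i^t$ satisfies \eqref{design_value_c}, the game \eqref{value} is a continuous exact potential game with potential function $F^t$ as given by \eqref{potentialfunction_c}. Next, I would translate the derivative-form identity \eqref{defi_c} into its difference counterpart, namely
\begin{equation}\nonumber
V_i^t(\mathbf{u}_i,\mathbf{u}_{-i}) - V_i^t(\mathbf{u}_i',\mathbf{u}_{-i})
= F^t(\mathbf{u}_i,\mathbf{u}_{-i}) - F^t(\mathbf{u}_i',\mathbf{u}_{-i}),
\end{equation}
for all $\mathbf{u}_i,\mathbf{u}_i'\in\mathcal{S}_i$ and $\mathbf{u}_{-i}\in\mathcal{S}_{-i}$. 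This follows from Proposition \ref{p1} under the standing assumptions that $\mathcal{S}_i$ is compact and connected and that $V_i^t$, $F^t$ are differentiable on an open superset of $\mathcal{S}$ (concretely, by integrating $\partial V_i^t/\partial \mathbf{u}_i - \partial F^t/\partial \mathbf{u}_i = 0$ along any piecewise-smooth path in $\mathcal{S}_i$ joining $\mathbf{u}_i'$ to $\mathbf{u}_i$).

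Next, I would apply the modified update rule. In the modified Algorithm \ref{A1}, agent $j$ replaces $\mathbf{u}_j(t)$ by $\mathbf{u}_j^*(t)$ only when $V_j^t(\mathbf{u}_j^*(t),\mathbf{u}_{-j}(t))\leq V_j^t(\mathbf{u}_j(t),\mathbf{u}_{-j}(t)) - \epsilon$. Plugging this into the difference identity above immediately gives
\begin{equation}\nonumber
F^t(\mathbf{u}_j^*(t),\mathbf{u}_{-j}(t)) \leq F^t(\mathbf{u}_j(t),\mathbf{u}_{-j}(t)) - \epsilon,
\end{equation}
so every strategy update strictly decreases the global potential $F^t$ by at least $\epsilon$. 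Because $F^t$ is continuous on the compact set $\mathcal{S}$, it attains a finite minimum $F^t_{\min}$; therefore, starting from any initial joint strategy with potential value $F^t_0$, the total number of updates is bounded above by $\lceil (F^t_0 - F^t_{\min})/\epsilon\rceil$, which is finite. Once a full inner loop over $j=1,\ldots,N$ produces no further update, the negated update condition yields $V_j^t(\mathbf{u}_j^*(t),\mathbf{u}_{-j}^*(t))\leq V_j^t(\mathbf{u}_j'(t),\mathbf{u}_{-j}^*(t))+\epsilon$ for every $\mathbf{u}_j'(t)\in\mathcal{S}_j$ and every $j$, so the modified $NashCondition$ holds and the algorithm terminates at an $\epsilon$-Nash equilibrium.

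The main technical obstacle is establishing the difference form of the potential identity, which is not literally the definition \eqref{defi_c}; once this bridge is in place via Proposition \ref{p1}, the rest is a short monotone-decrease argument combined with compactness. A minor subtlety worth checking is that the minimizers $\mathbf{u}_j^*(t)$ of $V_j^t(\cdot,\mathbf{u}_{-j}(t))$ over the compact set $\mathcal{S}_j$ exist (by continuity and compactness), which guarantees the update step is well defined.
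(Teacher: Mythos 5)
Your proposal is correct and follows essentially the same route as the paper's proof: invoke Theorem \ref{t6} to obtain the potential function, note that $F^t$ is bounded on the compact set $\mathcal{S}$, and conclude that the $\epsilon$-improvement path must terminate at an $\epsilon$-Nash equilibrium. You are somewhat more careful than the paper in explicitly bridging the derivative-form definition \eqref{defi_c} to the difference identity via Proposition \ref{p1} (a step the paper leaves implicit) and in bounding the iteration count by $\lceil (F^t_0 - F^t_{\min})/\epsilon\rceil$, but these are refinements of the same argument rather than a different approach.
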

\begin{proof}
   According to Theorem \ref{t6}, if  the cumulative cost $V_i^t$ in  \eqref{value} satisfies \eqref{design_value_c}, and if $\mathcal{S}$ is a connected set, then the game \eqref{value} is a continuous potential game. Because the potential function $F^t$ is continuous in the compact set $\mathcal{S}$, $F^t$ is bounded. Therefore, the $\epsilon$-improvement path generated by Algorithm \ref{A1} must terminate within a finite number of iterations because the $\epsilon$-increment of the potential function is finite. When an $\epsilon$-improvement path comes to an end point, there exists no strategy profile that leads to a decrease of cumulative cost greater than $\epsilon$ for any player. As such, this end point is an $\epsilon$-Nash equilibrium. 
\end{proof}

\section{Numerical Results}\label{simulation}
In this section, we apply  the developed {finite and continuous potential game frameworks} to specific traffic scenarios, including intersection-crossing scenarios and lane-changing scenarios.

{The vehicles' dynamics are described by a kinematic bicycle model \cite{bicycle_2,bicycle_validation}:
\begin{equation}\label{dynamics}
\begin{split}
    x_i(t+1)&=x_i(t)+   v_{i}(t)\text{cos}(\phi_i(t)+\beta_i(t))\Delta t,\\
    y_i(t+1)&=y_i(t)+ v_{i}(t)\text{sin}(\phi_i(t)+\beta_i(t))\Delta t,\\
    v_i(t+1)&=v_i(t)+ a_{i}(t)\Delta t,\\
    \phi_{i}(t+1)&=\phi_{i}(t)+ \frac{v_i(t)}{l_r}\text{sin}(\beta_i(t))\Delta t,\\
    \beta_i(t+1)&=\beta_i(t)+\text{tan}^{-1}\left(\frac{l_r}{l_r+l_f}\text{tan}(\delta_{i,f}(t))\right)\Delta t,
    \end{split}
\end{equation}
%\begin{equation}\label{dynamics}
%\begin{split}
%    x_i(t+1)&=x_i(t)+  v_{i}(t)\text{cos}(\phi_i(t))\Delta t ,\\
%    y_i(t+1)&=y_i(t)+ v_{i}(t)\text{sin}(\phi_i(t))\Delta t,\\
%    v_i(t+1)&=v_i(t)+ a_{i}(t)\Delta t,
%    \end{split}
%\end{equation}
where $i=1,\cdots,N$ designates the $i^{th}$ agent; $x_i$ and $y_i$ are the longitudinal and lateral position of the center of mass of  vehicle $i$ along $x$ and $y$ axes, respectively; $v_{i}$ and  $a_{i}$  are  the velocity and acceleration of the center of mass of  vehicle $i$, respectively; $\beta_i$ is  the angle of the velocity with respect to the longitudinal axis of the vehicle; $\phi_i$ is the inertial heading; $l_f$ and $l_r$ are the lengths from the center of mass to the front and rear ends of the car, respectively, and are selected as $l_f = l_r = 1.5$ $m$; $\delta_{i,f}$ is the steering angle of the front wheels, and $|\delta_{i,f}|\leq 20^{\circ}$. Because for most vehicles, the
rear wheels cannot be steered, the rear wheel steering angles are assumed to be zero \cite{bicycle_2}. The inputs of each vehicle are the acceleration $a_i(t)$ and the steering angle $\delta_{i,f}(t)$.}
%where $i=1,\cdots,N$ {corresponds to the traffic agent}, $x_i(t)$ and $y_i(t)$ are the longitudinal and lateral position of  vehicle $i$ at time $t$, respectively, $v_{i}(t)$ and  $a_{i}(t)$  are  the speed and acceleration of vehicle $i$, respectively, $\phi_i(t)$ is the  heading angle, and $\Delta t$ is the sampling period. The input of each vehicle is $u(t)=[a_i(t),\phi_i(t)]^T$. {For high-level AV decision-making, such a  simplified vehicle dynamics model is generally assumed to be an adequate abstraction to represent AVs' desired trajectories/maneuvers. See \cite{dynamics,add_dynamics_RL,add_dynamics_nan,add_dynamics_RTD,nan_game} for examples.}

The following numerical studies are conducted:  Section \ref{sub_1}  applies the finite potential game framework to lane-changing scenarios. Section \ref{sub_2}  applies the continuous potential game framework  to  intersection-crossing scenarios. Section \ref{sub_3}  compares the performance of various algorithms  in intersection-crossing scenarios, {and Section \ref{sub_4} compares potential games with other decision-making approaches including reinforcement learning and control barrier function based approaches.}  % in terms of computational complexity and the ego's performances. Section \ref{sub_4} conducts comparisons on the two solution algorithms:  best response dynamics and potential function optimization.%, in the continuous potential game framework. %. Studies 3 and 4 are both conducted in intersection-crossing scenarios. 

\subsection{Finite potential game in lane-changing scenarios}\label{sub_1}
Consider the three-lane highway scenario pictured in Figure \ref{lane_changing_figure}. When driving in highways, a vehicle needs to determine  1) whether to accelerate or decelerate, and 2) which lane to drive in. Therefore, we define the action space of each vehicle as
\begin{equation}\label{finite_space}  \mathcal{U}_i=\mathcal{A}_i\times\mathcal{L}_i,
\end{equation}
where 
\begin{equation}\nonumber
\begin{split}
    \mathcal{A}_i=&\{\text{hard brake, brake, mild brake, maintain speed, }\\ &\quad\text{mild accelerate, accelerate, rapid accelerate}\},
\end{split}
\end{equation}
\begin{equation}\nonumber
\begin{split}
\mathcal{L}_i=&\{\text{change to the left lane, }\text{maintain the current lane, }\\
&\quad\text{change to the right lane}\}.
\end{split}
\end{equation}
{In \eqref{finite_space}, both $\mathcal{A}_i$ and $\mathcal{L}_i$ are finite sets, and therefore, $\mathcal{U}_i$ is a finite action space.} We let $a_i(t)=-3$ $m/s^2$ (resp. $3$ $m/s^2$) if ``{hard brake}" (resp. ``{rapid accelerate}") is selected, $a_i(t)=-2$ $m/s^2$ (resp. $2$ $m/s^2$) if ``{brake}" (resp. ``{accelerate}") is selected, $a_i(t)=-1$ $m/s^2$ (resp. $1$ $m/s^2$) if ``{mild brake}" (resp. ``{mild accelerate}") is selected, and $a_i(t)=0$ $m/s^2$ if ``{maintain speed}" is selected. If ``change to the left lane" (resp. ``{change to the right lane}") is selected, then a constant steering angle  $\delta_{i,f}(t)=0.9^{\circ}$ (resp. $-0.9^{\circ}$) is added to the steering angle required to negotiate the road, if any, until the AV is in the center of the target lane. Once the AV reaches the center line, a lane centering controller is activated at $10$ $Hz$ sampling rate, applying the steering required to reduce the relative heading between the vehicle and the road. Specifically, the steering angle induces the vehicle side slip angle $\beta_i(t)$ in the amount of $-\frac{1}{2}(\phi_i(t)-\phi_{r}(t))$, where $\phi_{r}(t)$ depends on the road orientation. On the other hand, if the decision is ``stay in the current lane", then $\delta_{i,f}(t)$ remains  at the steering angle corresponding to the curvature of the road (or that of the turning lanes in an intersection). % (specifically, if on straight roads, $\delta_{i,f}(t)$ is selected such that $\beta_i(t)=-\frac{1}{2}\phi_i(t)$ with the update rate $0.1$ $s$; if on curved roads, the road curvature is taken into account correspondingly). If the vehicle decides to ``stay in the current lane", then $\delta_{i,f}(t)$ remains $0^{\circ}$ on straight roads or a constant on curved roads depending on the curvature of the road.

In our simulation, we use a simplified strategy space: $\mathbf{u}_i(t)=\{u_i(t),u_i(t+1),...,u_i(t+T-1)\}$, where $u_i(t+\tau)=u_i(t)\in\mathcal{U}_i$, $\forall \tau=0,1,...,T-1$. This simplified strategy space represents that the AV plans to continue the same action over the horizon $T$. This assumption is reasonable as it is consistent with a common driving experience. When a human driver plans a maneuver (e.g., lane-changing),  he/she usually expects to continue the maneuver for some time to complete it, instead of planning a sequence of different maneuvers for every $\Delta t$. Note that although the AV {plans} one maneuver for the next $T$ periods, it may change its mind and select  a different maneuver after $\Delta t$, triggered by the change of its environment {and} by the receding horizon optimization  conducted at every $t$. {In the simulation, we select $T$ to correspond to $4$ $s$  to cover a lane-changing maneuver duration \cite{add_T} and select $\Delta t=0.5$ $s$ to enable the ego vehicle to respond timely to the change of  its surrounding vehicles' states and behaviors \cite{add_deltaT}. Note that a larger $T$ and a smaller $\Delta t$ lead to increased computational time and effort for the decision-making algorithm.}

In such a multi-vehicle scenario, the  driving performance of the ego {vehicle} (labeled with the number "1" in Figure \ref{lane_changing_figure})   is correlated with its surrounding vehicles' states (i.e., positions and velocities) and actions (i.e., lane-changing and/or accelerations). To take into {consideration} such interactions, at every $t$, the ego {vehicle} solves {an} $N$-player game as described in \eqref{value}  to generate its optimal strategy.  We set $N=5$  in our simulations and let vehicles' initial positions be similar to the ones in Figure \ref{lane_changing_figure}. 

\begin{figure}[thpb]
\centering
\includegraphics[width=0.35\textwidth]{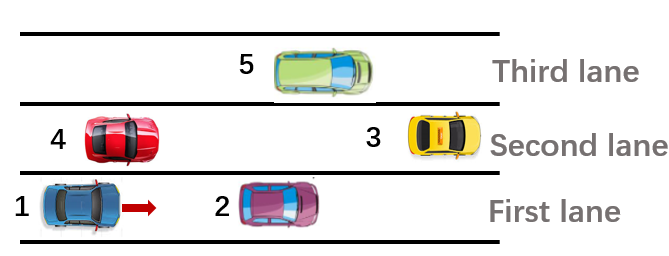}
\caption{The lane-changing scenario. }\label{lane_changing_figure}
\end{figure}

The  cumulative cost of each vehicle is {designed} as
\begin{equation}\label{cost_1}
\begin{split}
    &V_i^t(\mathbf{u}_i(t),\mathbf{u}_{-i}(t))\\
    &=\theta_{i,1}V_i^{t,self,1}(\mathbf{u}_i(t))+\theta_{i,2}V_i^{t,self,2}(\mathbf{u}_i(t))\\
    &\quad+\theta_{i,3}\sum_{j\in \mathcal{N},j\neq i}V^t_{ij}(\mathbf{u}_i(t),\mathbf{u}_j(t)),
\end{split}
\end{equation}
where $\theta_{i,1}$, $\theta_{i,2}$ and $\theta_{i,3}$ are constant parameters to tune vehicle $i$'s driving styles or aggressiveness. {In our simulations, we select $\theta_{i,1}=\theta_{i,2}=1$ and $\theta_{i,3}=4$ for all $i\in\mathcal{N}$. Equation \eqref{cost_1} facilitates a framework for modeling the surrounding vehicles with a variety of human driver behaviors, where the parameters can be learned or calibrated from naturalistic traffic dataset \cite{payoff1,add_collect} using inverse optimal control or inverse reinforcement learning  \cite{inverse_RL}.} The first term $V_i^{t,self,1}(\mathbf{u}_i(t))$, is to motivate the vehicle to track the desired speed, and is given by
\begin{equation}\label{trackspeed}
     V_i^{t,self,1}(\mathbf{u}_i(t))=\sum_{\tau=t}^{t+T-1} \left(\frac{v_i(\tau)-v_{i,d}}{v_{i,d}}\right)^2,
\end{equation}
where $v_{i,d}$ is the desired speed of vehicle $i$. {For the ego vehicle, we let $v_{1,d}=27$ $m/s$ (approximately  $60$ $mph$). For the surrounding vehicles, we let $v_{i,d}$ be within the range $v_{i,d}\in[22,30]$ $m/s$ for $i=2,3,4,5$. (Multiple  scenarios  with various  ego  
vehicle  desired  speed  in  the  range  of  $[10,30]$ $m/s$ are tested, and the ego vehicle performance remains similar in all tested scenarios.)} The second term $V_i^{t,self,2}(\mathbf{u}_i(t))$ is to prevent the vehicle from driving outside the road, and is given by
\begin{align}
&V_i^{t,self,2}(\mathbf{u}_i(t))=\sum_{\tau=t}^{t+T-1}\gamma\cdot\mathbf{1}_{\{y_i(\tau)<0 \text{ or } y_i(\tau)>W_d\}}
\end{align}
where $W_d$ is the width of the road, $\mathbf{1}_{\{condition\}}=1$ if $condition$ is true and $0$ otherwise.  {The constant $\gamma$ represents the penalty for driving out of the road and should be a sufficiently large number. We select $\gamma=1000$ in our simulation while noting that the AV behavior is relatively insensitive to this value as long as it is chosen sufficiently large.} The third term $\sum_{j\in \mathcal{N},j\neq i}V_{ij}^t(\mathbf{u}_i(t),\mathbf{u}_j(t))$ is to avoid  collision:
\begin{equation}\label{collision_design}
    V_{ij}^t(\mathbf{u}_i(t),\mathbf{u}_j(t))=\sum_{\tau=t}^{t+T-1}J_{ij}(x_i(\tau),y_i(\tau),x_j(\tau),y_j(\tau)),
\end{equation}
and
{\begin{equation}\label{tanh_cost}
\begin{split}
    &J_{ij}(x_i(\tau),y_i(\tau),x_j(\tau),y_j(\tau))\\
    &=\left(\tanh{\left(\beta(d^2_{x,c}-\left(x_i(t)-x_j(t)\right)^2)\right)}+1\right)\\
    &\quad\cdot\left(\tanh{(\beta(d^2_{y,c}-\left(y_i(t)-y_j(t)\right)^2))}+1\right),
    \end{split}
\end{equation}
where $d_{x,c}$ and $d_{y,c}$ are the longitudinal and lateral collision distances, respectively, and $d_{x,c}=7$ $m$ and $d_{y,c}=4.5$ $m$ in the simulation. The parameter $\beta$ should be sufficiently large such that the $tanh$ function always takes the two extreme values $-1$ or $1$, and we set $\beta=1000$ in the simulation. The cost \eqref{tanh_cost} means that at time $\tau$, if vehicle $j$ is a threat to vehicle $i$ both laterally and longitudinally, then $J_{ij}(x_i(\tau),y_i(\tau),x_j(\tau),y_j(\tau))\approx4$; Otherwise, $J_{ij}(x_i(\tau),y_i(\tau),x_j(\tau),y_j(\tau))\approx0$. Note that the lane width is $5$ $m$, and {the lateral collision distance $d_{y,c}$ is selected to be a bit smaller than the road width to realize the feature that if a vehicle is not in the same lane or in the target lane  of the ego vehicle (target lane refers to the lane that the ego vehicle aims to change to), then it should not be considered as a threat to the ego vehicle. The necessity of this feature is illustrated by a comparative study at \textit{https://youtu.be/aWXTfLdovfc}.}   %This cost design can effectively characterize vehicle interactions in lane-changing scenarios, because it enables the ego vehicle to stay alert to the vehicles with potential conflicts, instead of all vehicles that may be far away or may not be in the ego vehicle's  target lane (i.e., the lane that the ego vehicle is changing to). }
}

According to Theorem \ref{t3}, the above cost function design makes the $N$-player game a finite potential game. Therefore, we can employ the potential function optimization algorithm, i.e., Algorithm \ref{A2}, to solve the game. We let the ego {vehicle} {adopt} the generated NE as its strategy, {assuming that the surrounding vehicles also behave to optimize their own driving performances characterized by \eqref{cost_1}. As the actual behavior of the surrounding vehicles may  deviate from the assumed one, we perform statistical studies incorporating various surrounding vehicle behaviors in Section \ref{sub_3}. In the specific scenarios 1-4 listed below,} the surrounding vehicles use  non-intelligent strategies: maintaining a constant speed and heading. These simulation studies illustrate empirically the robustness of potential game approach against various surrounding vehicles' strategies. %Note that the ego {vehicle} does not know the strategies of its surrounding vehicles.  %We test the ego {vehicle}  driving {performance} in the following scenarios.

 \textbf{Scenario 1}. In this scenario, we let  vehicle $2$ (resp. vehicle $3$) move slower (resp. faster) than the ego {vehicle}. As the front vehicle moves slowly, the ego {vehicle} decides to perform lane-changing when it is safe. After changing to the second lane, the ego {vehicle} then keeps its desired speed and drives in the second lane for the rest of the simulation time. {The ego vehicle trajectory is shown in Figure \ref{lanechanging_1_1}, its velocity and  control inputs are shown in Figures \ref{lanechanging_1v} and \ref{lanechanging_1u}, respectively, and the full animation is provided as ``Scenario 1" at \textit{https://youtu.be/nQkpdQRcwEE}.}

\begin{figure}[thpb]
\centering
\subfigure[]{\label{lanechanging_1_1}
\includegraphics[width=0.36\textwidth]{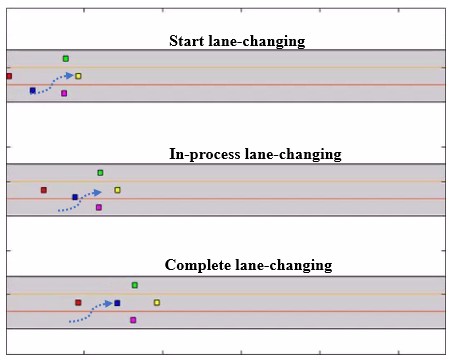}}
\subfigure[]{\label{lanechanging_1v}
\includegraphics[width=0.23\textwidth]{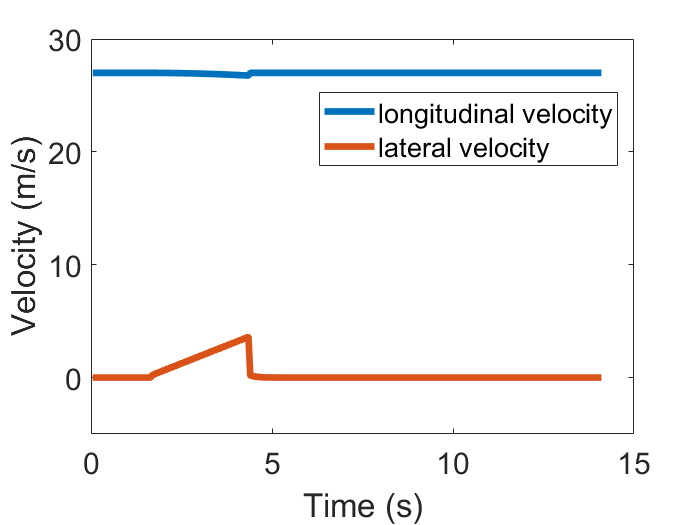}}
\subfigure[]{\label{lanechanging_1u}
\includegraphics[width=0.23\textwidth]{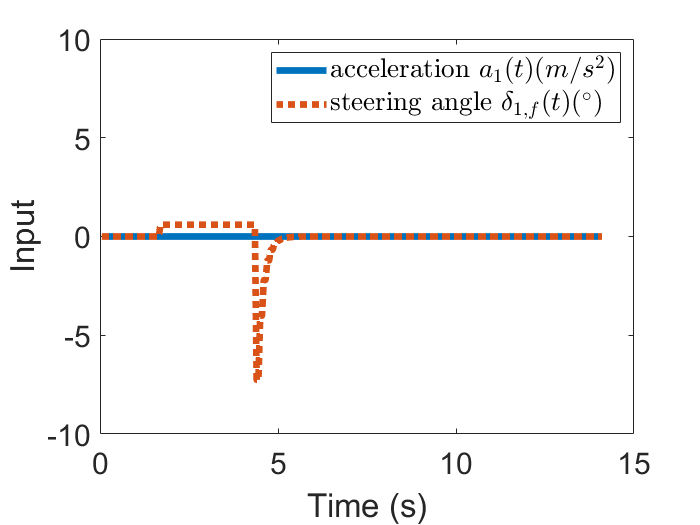}}
{\caption{The ego {vehicle}  behavior in Scenario 1. (a) The ego vehicle trajectory. {(b) The ego vehicle velocity. (c) The ego vehicle acceleration and steering angle.}}}\label{lanechanging_1}
\end{figure}

\textbf{Scenario 2.} In this scenario, we let both vehicles $2$ and $3$ move slower than the ego {vehicle}, and vehicle $5$ moves faster. In this case, the ego {vehicle} is no longer  satisfied driving in the second lane, and as a result, it performs two consecutive lane-changing when it is safe. {The ego vehicle behavior is shown in Figure \ref{lanechanging_2}, and the full animation is provided as ``Scenario 2" at \textit{https://youtu.be/nQkpdQRcwEE}.}
\begin{figure}[thpb]
\centering
\subfigure[]{\label{lanechanging_2_1}
\includegraphics[width=0.36\textwidth]{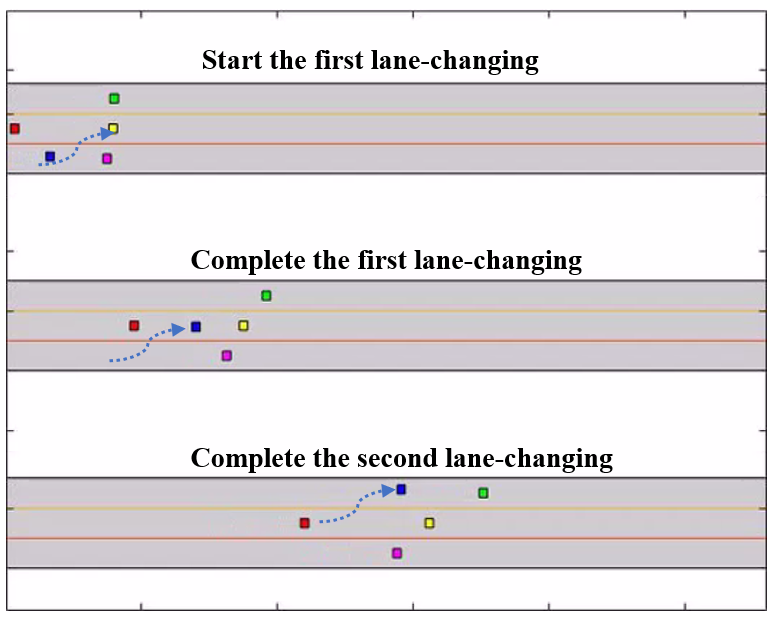}}
\subfigure[]{\label{lanechanging_2v}
\includegraphics[width=0.23\textwidth]{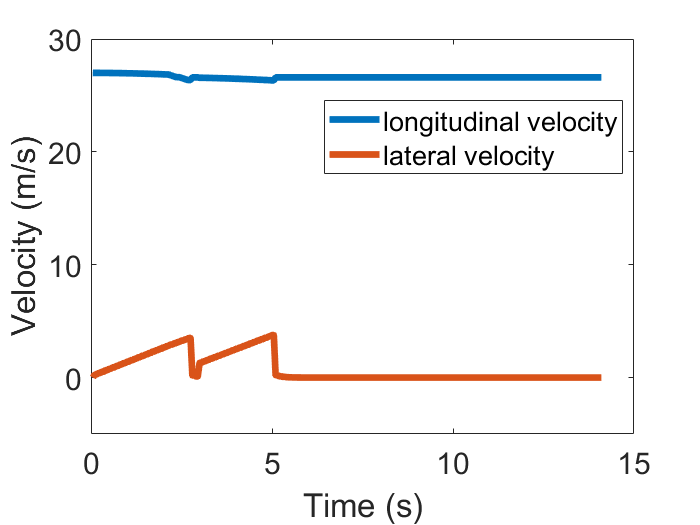}}
\subfigure[]{\label{lanechanging_2u}
\includegraphics[width=0.23\textwidth]{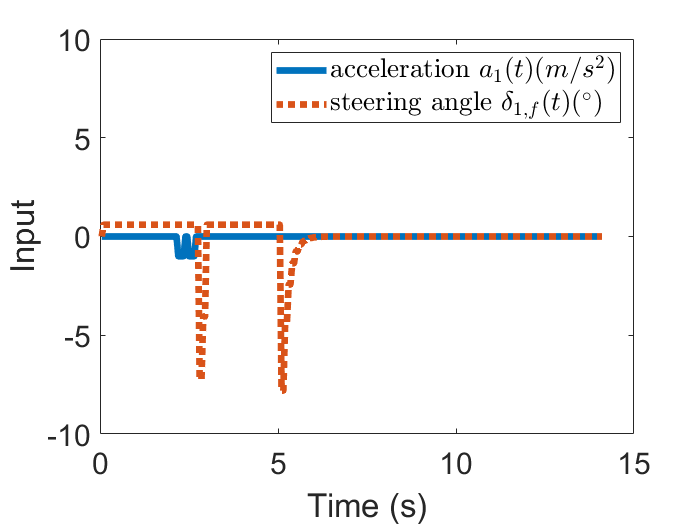}}
{\caption{The ego  {vehicle}  behavior in Scenario 2. (a) The ego  {vehicle} trajectory. {(b) The ego vehicle velocity. (c) The ego vehicle acceleration and steering angle.}}}\label{lanechanging_2}
\end{figure}

\textbf{Scenario 3.} In this scenario, we let all three  vehicles in-front, i.e., vehicles $2$, $3$, and $5$, move slower than the ego  {vehicle} desired speed. As {these} three vehicles block all of the three lanes, it is hard for the ego  {vehicle} to keep its desired speed. {Based on the game theoretic strategy, the ego  {vehicle} chooses actions that result} in both desired speed and safety. It first changes to the second lane and follows vehicle $3$. {After it overtakes vehicle $2$, the ego  {vehicle} then changes back to the first lane to maintain its desired speed.} In this scenario,  the ego  {vehicle} intelligently prepares itself for the opportunity to overtake vehicle $2$, indicating that the ego  {vehicle} has the ability to plan {actions over} the long run. {The ego vehicle trajectory and behavior are shown in Figure \ref{lanechanging_3}, and the full animation is shown in ``Scenario 3" at \textit{https://youtu.be/nQkpdQRcwEE}.}
\begin{figure}[thpb]
\centering
\subfigure[]{\label{lanechanging_3_1}
\includegraphics[width=0.36\textwidth]{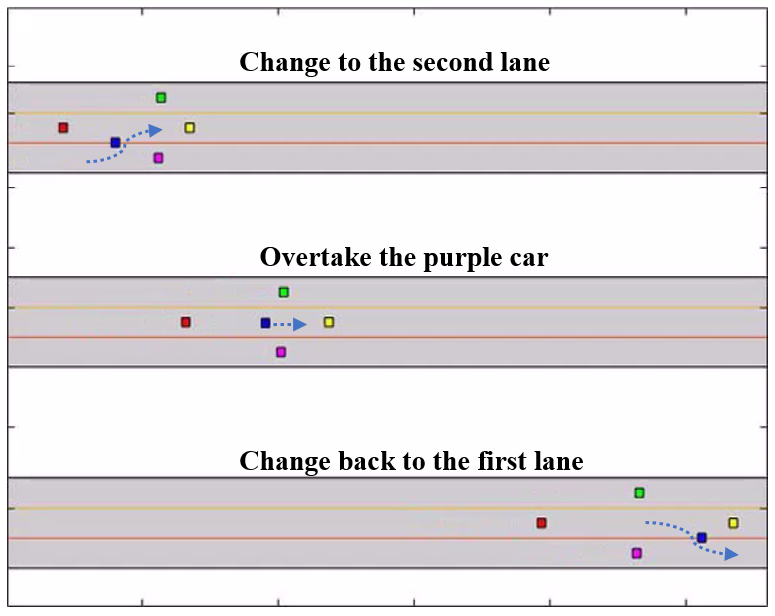}}
\subfigure[]{\label{lanechanging_3v}
\includegraphics[width=0.23\textwidth]{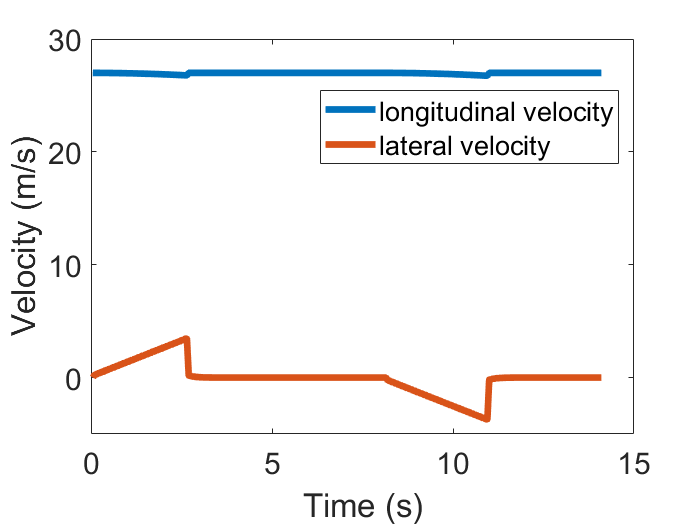}}
\subfigure[]{\label{lanechanging_3u}
\includegraphics[width=0.23\textwidth]{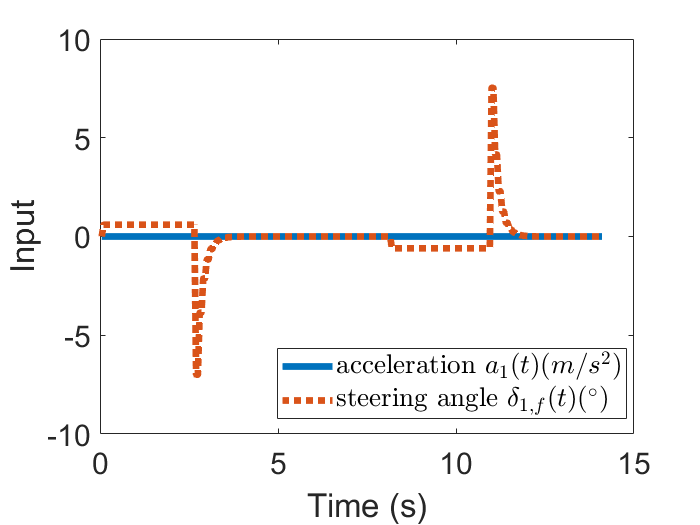}}
{\caption{The ego {vehicle}  behavior in Scenario 3. (a) The ego {vehicle} trajectory. {(b) The ego vehicle velocity. (c) The ego vehicle acceleration and steering angle.}}}\label{lanechanging_3}
\end{figure}

{\textbf{Scenario 4.} In this scenario, we test the ego vehicle lane-changing behavior on curved roads. The vehicles' initial positions and velocities are similar to Scenario 1, i.e., vehicle $2$ moves slower than the ego vehicle, and vehicle $3$ moves faster. To keep the desired speed, the ego vehicle changes to the second lane when it is safe, as shown in Figure \ref{lanechanging_4}. The full animation is available as ``Scenario 4" at \textit{https://youtu.be/nQkpdQRcwEE}.}

\begin{figure}[thpb]
\centering
\subfigure[]{\label{lanechanging_4_1}
\includegraphics[width=0.32\textwidth]{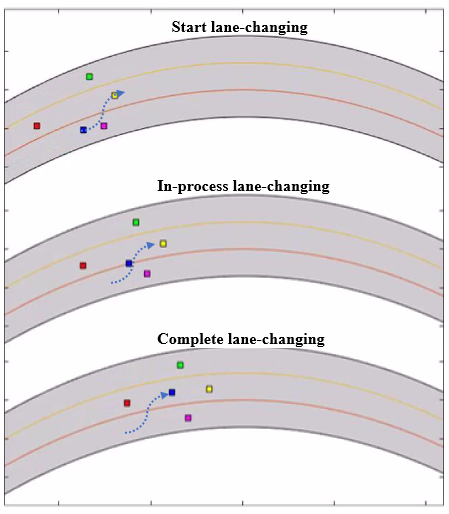}}
\subfigure[]{\label{lanechanging_4v}
\includegraphics[width=0.23\textwidth]{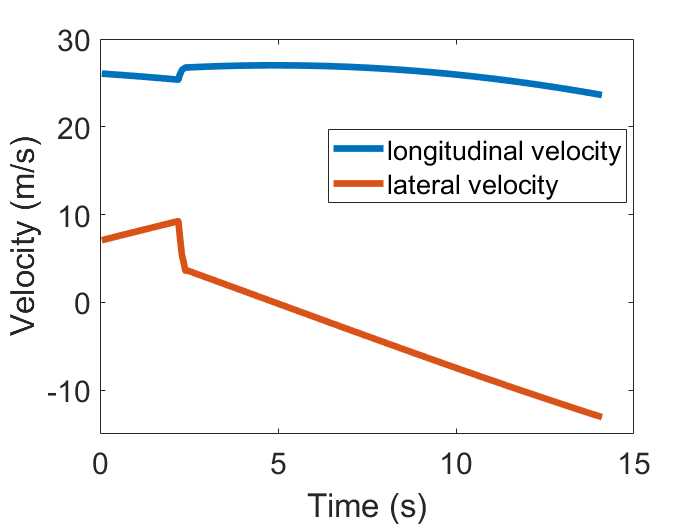}}
\subfigure[]{\label{lanechanging_4u}
\includegraphics[width=0.23\textwidth]{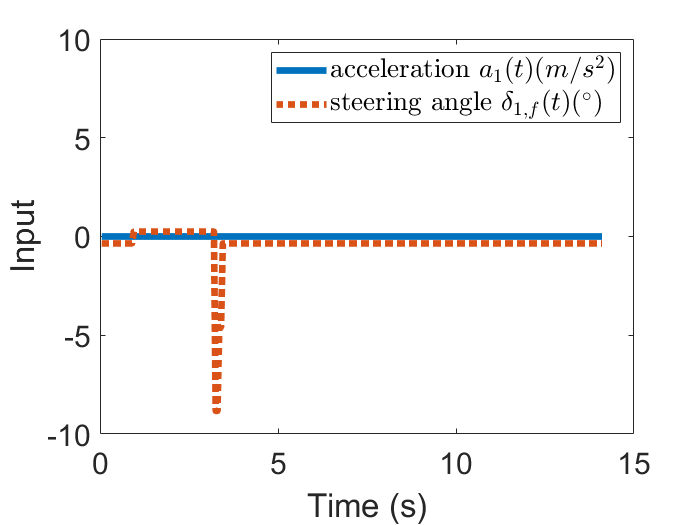}}
{\caption{The ego {vehicle}  behavior in Scenario 4. (a) The ego vehicle trajectory. {(b) The ego vehicle velocity. (c) The ego vehicle acceleration and steering angle.}}}\label{lanechanging_4}
\end{figure}

\subsection{Continuous potential game in intersection-crossing scenarios}\label{sub_2}
In this section, we apply the continuous potential game framework developed in Section \ref{sec_continuous} {to} intersection-crossing scenarios. 

Consider the  scenario pictured in Figure \ref{intersection}. The vehicle labeled with the number "1" is the ego  {vehicle}, aiming to go straight and cross the intersection. The moving directions of its surrounding vehicles (i.e., vehicles $2$-$5$) are assumed to be known to the ego vehicle, and they are marked using grey arrows in Figure \ref{intersection}. {We let the action space of each vehicle be {$\mathcal{U}_i=[-3,3]$ $m/s^2$}, representing that the vehicle acceleration $a_i(t)$ can be any real number from the interval $[-3,3]$.} We also employ the simplified strategy space as illustrated in Section \ref{sub_1}.

The cumulative cost {function} of each vehicle is {chosen} as 

\begin{equation}\label{cost_2}
\begin{split}
   & V_i^t(\mathbf{u}_i(t),\mathbf{u}_{-i}(t))\\&=\theta_{i,1}V_i^{t,self,1}(\mathbf{u}_i(t))
    +\theta_{i,2}\sum_{j\in \mathcal{N},j\neq i}V_{ij}^t(\mathbf{u}_i(t),\mathbf{u}_j(t))+\varpi_i,
\end{split}
\end{equation}
where $\varpi_i$ is a small positive  noise to prevent vehicles from getting stuck in case of identical cost and initial conditions, $V_i^{t,self,1}(\mathbf{u}_i(t))$ takes the form \eqref{trackspeed} to motivate vehicles to track their desired speeds, and $V_{ij}^t,(\mathbf{u}_i(t),\mathbf{u}_j(t))$ takes the form \eqref{collision_design} with $J_{ij}$ designed as
\begin{equation}\label{cost_inverse}
\begin{split}
     &J_{ij}(x_i(\tau),y_i(\tau),x_j(\tau),y_i(\tau))\\
     &\quad= \frac{1}{\left(x_i(\tau)-x_j(\tau)\right)^2+\left(y_i(\tau)-y_j(\tau)\right)^2+\delta},
\end{split}   
\end{equation}
if vehicles $i$ and $j$ are in conflict (i.e., their planned trajectories intersect in the intersection region), and $J_{ij}(x_i(\tau),y_i(\tau),x_j(\tau),y_i(\tau))=0$ if $i$ and $j$ are not in conflict. {This inverse-type cost function enables the ego vehicle to adjust its speed smoothly and continuously in the intersection-crossing scenarios. The parameter $\delta$ in \eqref{cost_inverse} is used to avoid the denominator being zero during the simulation, and  $\delta=0.01$.}

According to Theorem \ref{t6}, the $N$-player game is a continuous potential game with the above  cost function design. %As such, we employ the potential function optimization algorithm, i.e.,  Algorithm \ref{A2}, to generate  the ego  {vehicle} strategies.  We test the ego {vehicle}  driving {performance} in the following scenarios.

%The sampling period and prediction horizon are selected the same as the ones in Section \ref{sub_1}. The initial velocities  are set as $5m/s$ for all vehicles in all of the following scenarios.  The game decision is derived by potential function optimization, i.e., Algorithm \ref{A4}.

\begin{figure}[thpb]
\centering
\includegraphics[width=0.3\textwidth]{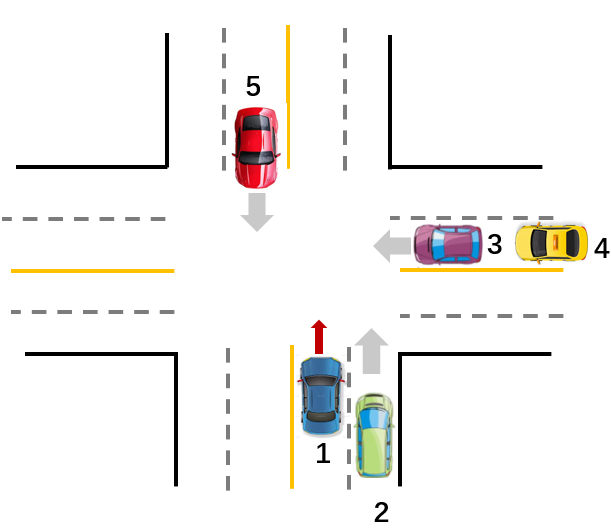}
\caption{The lane-changing scenario }\label{intersection}
\end{figure}

 Figure \ref{intersection_1} shows the performance of the continuous potential game in a specific scenario, where all vehicles follow the NE strategies. The speed of each vehicle is shown {in} the top left corner in Figures \ref{intersection_1_1} and \ref{intersection_1_2}. The ego vehicle first slows down to yield to the purple and yellow vehicles (Figure \ref{intersection_1_1}) and then speeds up after these two vehicles cross the intersection (Figure \ref{intersection_1_2}). % In this case, all vehicles actively and collaboratively adjust their speeds to  cross the intersection safely and efficiently. The ego {vehicle} average speed during this process is $3.92m/s$, indicating high travel efficiency in such a crowded intersection. 
 {The performance of the best response dynamics and of the potential function optimization algorithm are very close in this scenario. (Comparative animations are available at \textit{https://youtu.be/nQkpdQRcwEE}). The iterations of the best response dynamics are shown in Figures \ref{intersection_1_dy1} and \ref{intersection_1_dy2}, at the time instance illustrated in Figure \ref{intersection_1_1} and that of Figure \ref{intersection_1_2}, respectively, validating the convergence of the best response dynamics.

We also consider the scenarios where the surrounding vehicles do not take the NE strategies. In these scenarios, the potential function optimization algorithm outperforms the best response dynamics in terms of better safety performance. The specific scenarios are shown in the first part of the intersection-crossing animation at \textit{https://youtu.be/nQkpdQRcwEE}, and the statistical comparison is shown in the next subsection.}
\begin{figure}[thpb]
\centering
\subfigure[]{\label{intersection_1_1}
\includegraphics[width=0.21\textwidth]{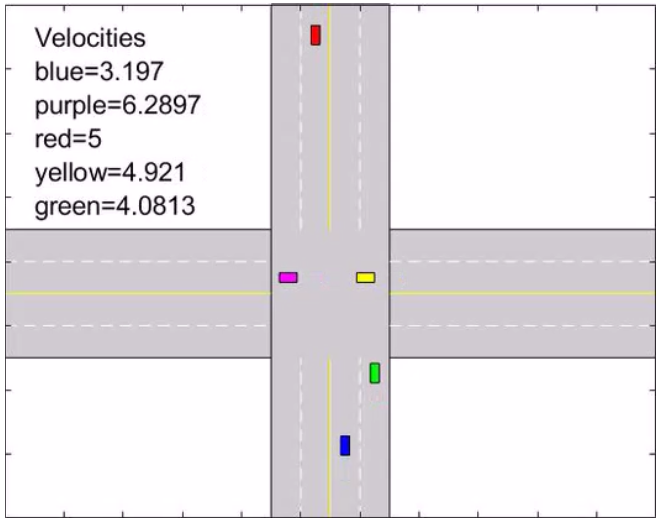}}
\subfigure[]{\label{intersection_1_2}
\includegraphics[width=0.21\textwidth]{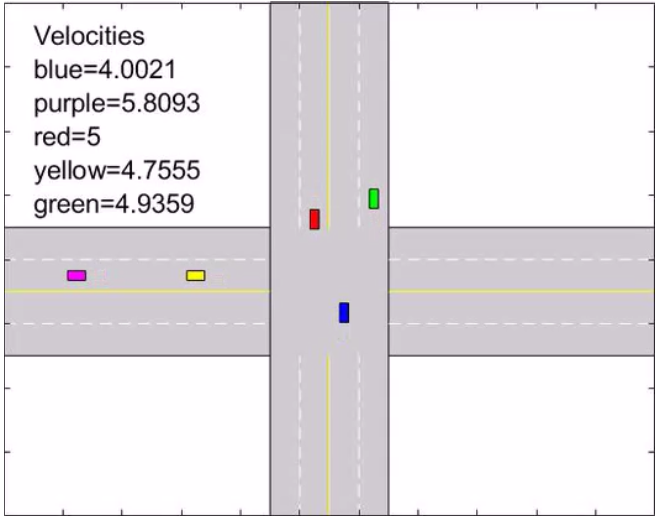}}
\subfigure[]{\label{intersection_1_dy1}
\includegraphics[width=0.234\textwidth]{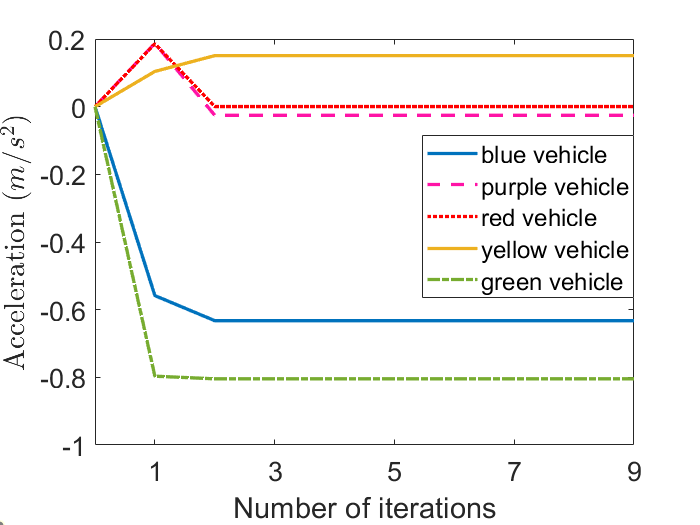}}
\subfigure[]{\label{intersection_1_dy2}
\includegraphics[width=0.234\textwidth]{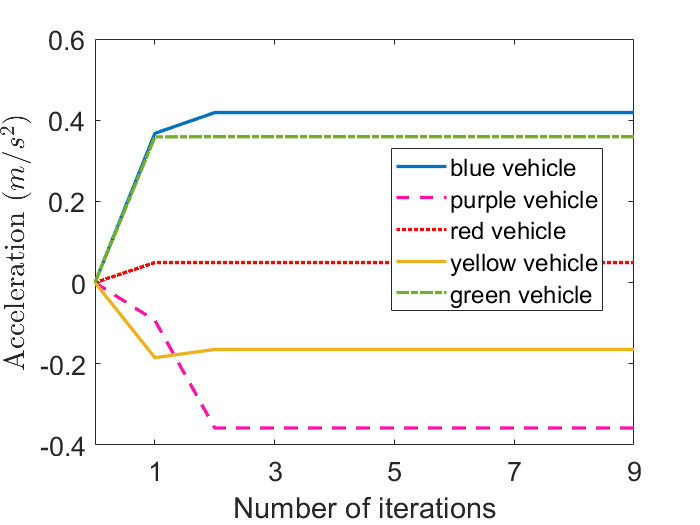}}
\caption{The ego {vehicle}  behavior in the intersection-crossing scenario, where all vehicles follow the NE strategies. (a) The ego {vehicle} slows down to avoid crashing into vehicles $3$ and $4$, which are speeding up to avoid collisions with the ego  {vehicle}. b) When it is  safe, the ego  {vehicle} speeds up to maintain its desired speed. {(c) The best response iteration at the time instance of (a). (d) The best response iteration at the time instance of (b).}}\label{intersection_1}
\end{figure}

\subsection{Statistical comparison of various potential game algorithms}\label{sub_3}
In this subsection, we conduct two comparative studies in  intersection-crossing scenarios: 1)  finite potential game vs. continuous potential game, and  2) best response dynamics vs. potential function optimization algorithm.  For each study, the comparisons are conducted in terms of a) computational efficiency measured by the average {and maximum} computational time  to complete one decision-making process, b) the ego  {vehicle}  safety measured by the  collision rate, average and maximum collision speeds, and c) the ego  {vehicle}  travel efficiency measured by the ego  {vehicle}   average speed during the  intersection-crossing. As for the surrounding vehicles,  we assign them three different strategies: a) employing the NE, b) keeping a constant speed, and c) randomly selecting an acceleration from the action space $\mathcal{U}_i$ at each $t$. The first strategy is rational and intelligent. The second strategy is non-intelligent and safety-agnostic, but may be encountered  in real scenarios as human driver may be distracted from driving and therefore cannot respond timely to a collision threat. The third strategy is neither intelligent nor rational  {but useful for testing robustness of the algorithms}.   For each algorithm, we test $5000$ intersection situations with randomly selected vehicles' initial positions  to provide reliable statistical results. In each situation,  $5$ vehicles get involved (as shown in Figure \ref{intersection}), i.e., the ego  {vehicle} solves a $5$-player game for each decision-making.  % and lasts for $12$ seconds. 
 
 Table \ref{table1} shows the comparative results between finite potential game and continuous potential game. The running time is collected from MATLAB$^{\circledR}$ on a laptop with an Intel Core i7-10750H processor clocked at $2.60$ GHz and $16$ GB of RAM. We use the genetic algorithm function '\textit{ga}' in Matlab \cite{ga,add_ga}  to efficiently solve the non-convex optimization problems. For potential function optimization (Algorithm \ref{A2}),  $10$ start points are selected, and for best response dynamics (Algorithm \ref{A1}), $3$ start points are selected for each agent's local optimization. The number of start points is selected based on the criterion that increasing this number does not necessarily improve the ego vehicle performance (in terms of collision rate and average velocity) while decreasing this number clearly leads to worse performance.
 The collision rate represents the number of situations where a collision  happens to the ego  {vehicle} out of the total number of situations (i.e., $5000$). Concerning the average speed, note that the ego  {vehicle}  desired speed is $5m/s$, and in general, the closer the average speed  to $5$ $m/s$, the better  {the} travel efficiency  {of} the ego  {vehicle}. The average and maximum ego vehicle speed at collision indicates the contribution of the ego vehicle to the collision impact, which relates to the two vehicles' relative speed.   The action space of each vehicle is: {$\mathcal{U}_i=[-3,3]$ $m/s^2$} for continuous game and {$\mathcal{U}_i=\{-3, -2, -1, 0, 1, 2, 3\}$ $m/s^2$} for finite game.  For both potential games, the NE  is {determined by} the potential function optimization algorithm, i.e., Algorithm \ref{A1}. 
 
 \begin{table*}[!h]
\centering
\caption{Comparative results: Finite potential game vs. Continuous potential game}
\begin{tabular}{c|c|c|c|c|c|c}
\hline  
\multicolumn{1}{c|}{\multirow{2}{*}{}}
&\multicolumn{3}{c|}{\multirow{2}{*}{Finite potential game}}
&\multicolumn{3}{c}{\multirow{2}{*}{Continuous potential game}}\\
\multicolumn{1}{c|}{\multirow{2}{*}{}}
&\multicolumn{3}{c|}{\multirow{2}{*}{}}
&\multicolumn{3}{c}{\multirow{2}{*}{}}
%\multicolumn{2}{c}{\multirow{2}{*}{Multi-col-row}}&X\\
 %   \multicolumn{2}{c}{}&X\\
\\
\hline
Surrounding vehicles' strategies & NE  & Constant speed & Random acceleration & NE& Constant speed & Random acceleration 
\\
%\cline{2-3}
\hline
\multirow{1}{*}{Collision rate} & $0/5000$ & $0/5000$ & $21/5000$ & $0/5000$ & $0/5000$ & $14/5000$
\\
\hline
\multirow{1}{*}{Average ego speed (m/s)} & $3.78$ & $3.05$ & $3.01$ & $3.93$ & $3.21$ & $3.09$
\\
\hline
\multirow{1}{*}{Avg/Max relative collision speed (m/s)} & N/A & N/A & $10.16/13.56$ & N/A & N/A & $10.54/14.68$
\\
\hline
\multirow{1}{*}{Avg/Max ego speed at collision (m/s)} & N/A & N/A & $8.06/9.50$ & N/A & N/A & $8.06/9.61$
\\
\hline
\multicolumn{1}{c|}{\multirow{1}{*}{Avg/Max computational time (s)}} &
\multicolumn{3}{c|}{\multirow{1}{*}{$0.05/ 0.10$}} &
\multicolumn{3}{c}{\multirow{1}{*}{$0.07/ 0.28$}}
\\
\hline
\end{tabular}\label{table1}
\end{table*}

Table \ref{table1} leads to the following observations:
\begin{enumerate}
    \item  Both potential game frameworks are effective in ensuring the ego  {vehicle}  safety: No collision happens in any of the $5000$ scenarios if the surrounding vehicles {follow} the NE strategies or  maintain constant speeds. Although absolute safety is not guaranteed if the surrounding vehicles randomly  accelerates at each $t$ (which is too irrational to happen in real life), the collision rate is still low (less than $1\%$). %This observation shows that both potential game frameworks are promising in maintaining the ego  {vehicle}  safety. % actions subject to diverse situations and non-intelligent surrounding vehicles. 
    \item In both games, the ego  {vehicle shows}  better travel efficiency if the surrounding vehicles {follow} the NE strategies, compared to the one of constant speeds. It implies that the more intelligent the surrounding vehicles are, the better {the} ego vehicle travel efficiency is. Meanwhile, even if the surrounding vehicles are non-intelligent (maintaining constant speeds), the ego  {vehicle}  travel efficiency remains {satisfactory} ($3.21$ $m/s$ in continuous game and $3.05$ $m/s$ in finite game), while ensuring safety.
    \item   Computational cost is affordable in both potential game frameworks (less than $0.1$ $s$ {on average}). % continuous game and {$0.060s$} in finite game {on average}). %Compared to the finite game, the continuous game {results in} better computational efficiency.  
    \item In general, the  continuous potential game outperforms the finite potential game, as it leads to better travel efficiency with similar safety and computation performance.
\end{enumerate}

 \begin{table*}[!h]
\centering
\caption{Comparative results: Best response dynamics vs. potential function optimization}
\begin{tabular}{c|c|c|c|c|c|c}
\hline  
\multicolumn{1}{c|}{\multirow{2}{*}{}}
&\multicolumn{3}{c|}{\multirow{2}{*}{Best response dynamics}}
&\multicolumn{3}{c}{\multirow{2}{*}{Potential function optimization}}\\
\multicolumn{1}{c|}{\multirow{2}{*}{}}
&\multicolumn{3}{c|}{\multirow{2}{*}{}}
&\multicolumn{3}{c}{\multirow{2}{*}{}}
%\multicolumn{2}{c}{\multirow{2}{*}{Multi-col-row}}&X\\
 %   \multicolumn{2}{c}{}&X\\
\\
\hline
Surrounding vehicles' strategies & NE & Constant speed & Random acceleration & NE& Constant speed & Random acceleration 
\\
%\cline{2-3}
\hline
\multirow{1}{*}{Collision rate} & $0/5000$ & $61/5000$ & $106/5000$ & $0/5000$ & $0/5000$ & $14/5000$
\\
\hline
\multirow{1}{*}{Average ego speed (m/s)} & $4.18$ & $3.70$ & $3.58$ & $3.93$ & $3.21$ & $3.09$
\\
\hline
\multirow{1}{*}{Avg/Max relative collision speed (m/s)} & N/A & $8.84/9.62$ & $9.62/12.88$ & N/A & N/A & $10.54/14.68$
\\
\hline
\multirow{1}{*}{Avg/Max ego speed at collision (m/s)} & N/A & $7.19/8.36$ & $6.67/8.70$ & N/A & N/A & $8.06/9.61$
\\
\hline
\multicolumn{1}{c|}{\multirow{1}{*}{Avg/Max computational time (s)}} &
\multicolumn{3}{c|}{\multirow{1}{*}{$0.24/0.93$}} &
\multicolumn{3}{c}{\multirow{1}{*}{$0.07/0.28$}}
\\
\hline
\end{tabular}\label{table2}
\end{table*}

Table \ref{table2} compares the results from the two solution algorithms (best response dynamics and potential function optimization) in the continuous potential game framework. The simulation settings are selected to be the same as the ones for Table \ref{table1}.

Table \ref{table2} leads to the following observations:
\begin{enumerate}
    \item Both solution algorithms are effective in ensuring the ego  {vehicle}  safety when the surrounding vehicles {follow} the NE strategies: No collision happens in any of the $5000$ scenarios. If the surrounding vehicles are non-intelligent (i.e., maintaining constant speeds or using random accelerations), the potential function optimization algorithm performs notably better than the best response dynamics. This is because the potential function optimization finds the global optimal NE, while the best response dynamics may converge to any NE.  
    \item With regard to the ego  {vehicle}  travel efficiency, although the best response dynamics lead to higher average speeds, we cannot conclude that it has better travel efficiency. This is because the two solution algorithms have different safety levels, and although the best response dynamics result in higher speeds, it {puts} the ego  {vehicle} {into more} dangerous situations. % (as we can see from the collision rate).  As ,  the travel efficiency comparison can not be conducted.% should not be employed as a solid conclusion.  
    \item Computational cost is affordable in both solution algorithms ($0.07$ $s$ in potential function optimization and $0.24$ $s$ in best response dynamics, {on average}). Compared to the best response dynamics, the potential function optimization  has better computational efficiency. 
    \item In general, the   potential function optimization algorithm outperforms the best response dynamics, as it {requires} less computational time and leads to better safety  for the ego  {vehicle}.
\end{enumerate}
{
\subsection{Statistical comparison of various AV decision-making algorithms}\label{sub_4}
In this subsection, we compare potential games with other existing decision-making approaches, including reinforcement learning and control barrier function. The simulation scenarios and settings are the same as the ones in Section \ref{sub_3}, and the agents' action spaces are finite. 

For potential games, we employ the potential function optimization algorithm.

For reinforcement learning, we employ a double deep Q-network (DDQN) algorithm \cite{add_DQN_2,add_DQN_1}. Similar to the potential game, the ego vehicle's cost function in RL is also composed of two parts: One is to track the desired speed as in \eqref{trackspeed}, and the other is to avoid collision (specifically, if a collision happens, then a large penalty is applied). During the training stage, the surrounding vehicle behaviors are governed by a simplified intelligent driver model (IDM) \cite{IDM}: If vehicle $i$ senses a collision threat with vehicle $j$ (i.e., inter-vehicle distance is less than a pre-defined safe distance), then vehicle $i$ brakes if $T^c_{ij}\geq T^c_{ji}$ and accelerates otherwise, with constant a acceleration/deceleration. Here $T^c_{ij}$ represents time-to-collision for vehicle $i$ with vehicle $j$ (see \cite{add_DQN_1,mine_1} for the mathematical expression). We train the RL agent for $100000$ episodes, where each episode contains 24 samples (corresponds to the $12$ $s$ duration of the intersection crossing with $0.5$ $s$ sampling time) or less (if collisions occur). For the execution stage, we let the surrounding vehicles follow one of two strategies, NE or constant speed, to test the robustness of the performance. In this intersection-crossing scenario, the vehicle behaviors from the NE and from the IDM are very close, as they both motivate the vehicles to cooperatively avoid collisions by accelerating (resp. decelerating) the vehicle that is closer (resp. farther) from the conflict point.

For the CBF-based approach, because the ego vehicle cannot control the surrounding vehicles' behaviors, we use a variant of the centralized CBF -- the predictor-corrector collision avoidance (PCCA) of  \cite{pcca}.  Note that differently from \cite{pcca}, the ego vehicle's action space is bounded in the driving scenario. If the generated action from the CBF exceeds the bound, then the bound value is applied in our implementation.

 \begin{table*}[!h]
\centering
{
\caption{Comparative results: Potential game vs. Reinforcement learning vs. Control barrier function}
\begin{tabular}{c|c|c|c|c|c|c}
\hline  
\multicolumn{1}{c|}{\multirow{2}{*}{}}
&\multicolumn{2}{c|}{\multirow{2}{*}{Potential game}}
&\multicolumn{2}{c|}{\multirow{2}{*}{Control barrier function}}
&\multicolumn{2}{c}{\multirow{2}{*}{Reinforcement learning}}\\
\multicolumn{1}{c|}{\multirow{2}{*}{}}
&\multicolumn{2}{c|}{\multirow{2}{*}{}}
&\multicolumn{2}{c|}{\multirow{2}{*}{}}
&\multicolumn{2}{c}{\multirow{2}{*}{}}
\\
\hline
Surrounding vehicles' strategies & NE & Constant speed & NE & Constant speed & NE & Constant speed 
\\
%\cline{2-3}
\hline
\multirow{1}{*}{Collision rate} & $0/5000$ & $0/5000$ & $0/5000$ & $2229/5000$ & $0/5000$ & $2403/5000$
\\
\hline
\multirow{1}{*}{Average ego speed (m/s)} & $3.78$ & $3.05$ & $5.00$ & $4.19$ & $4.57$ & $4.89$
\\
\hline
\multirow{1}{*}{Avg/Max relative collision speed (m/s)} & N/A & N/A & N/A & 5.23/7.07 & N/A & $6.82/7.07$
\\
\hline
\multirow{1}{*}{Avg/Max ego speed at collision (m/s)} & N/A & N/A & N/A & 0.72/5.00 & N/A & $4.63/5.00$
\\
\hline
\multicolumn{1}{c|}{\multirow{1}{*}{Ave/Max computational time (s)}} &
\multicolumn{2}{c|}{\multirow{1}{*}{$0.05/0.10$}} &
\multicolumn{2}{c|}{\multirow{1}{*}{$<0.01/<0.01$}}&
\multicolumn{2}{c}{\multirow{1}{*}{$<0.01/<0.01$}} 
\\
\hline
\end{tabular}\label{table3}
}
\end{table*}

The statistical comparison of the above approaches is shown in Table \ref{table3}. (Performance in specific scenarios is shown in the last part at \textit{https://youtu.be/nQkpdQRcwEE}.) 
Table \ref{table3} leads to the following observations:
\begin{enumerate}
    \item The reinforcement learning approach performs well when the surrounding vehicles follow the NE strategies, while its performance  significantly degrades  when other vehicles follow the constant speed strategy. It is because the performance of RL is highly dependent on the training dataset. Since the deep Q-network is trained by IDM-governed surrounding vehicles, if a vehicle does not follow the IDM and is safety-agnostic, it can easily lead to a collision with the ego vehicle. In real world,  it is extremely hard, if not impossible, to reflect all situations and all vehicle behaviors  in the training dataset, and therefore, the robustness of the RL-based algorithm can be a major concern in autonomous driving. % AV decision-making algorithm is critical.        
    \item The control barrier function based approach is not able to ensure safety either, when the surrounding vehicles follow the constant speed strategy. It is because to guarantee the forward invariance of a pre-defined safety set, the CBF based approach generally requires all agents' adherence to the CBF constraints \cite{CBF}. In a decentralized setting where other agents may not adhere to the CBF constraints, to ensure recursive feasibility, an unbounded action space of the ego vehicle is usually required \cite{pcca}, which is unrealistic in driving applications. 
    \item In contrast to RL and CBF, the potential game approach demonstrates significantly better robustness: It successfully avoids all collisions even if  the surrounding vehicles follow the constant speed strategy. 
    \item All three approaches are computationally feasible, as the running time is always less then the sampling time in all tested scenarios. % situations, indicating better robustness and reliability against various surrounding vehicles' strategies.
\end{enumerate}
} 

\section{Conclusion}\label{VI}
This paper developed two potential game based frameworks to {formulate the process of} decision-making in autonomous driving when multiple traffic agents {are} involved. Theoretical guarantees for the existence of NE and convergence of NE seeking algorithms {were presented}. Scalability challenge was addressed. Approaches to constructing potential games {for} autonomous driving applications {were} provided.  By  comprehensive numerical studies, we showed that our developed potential game frameworks can generate safe and effective decisions {for the ego vehicle} in diverse  traffic scenarios including lane-changing and intersection-crossing. {Comparative studies with the RL and the CBF based approaches showed that the developed potential game approach leads to the best robustness against safety-agnostic surrounding vehicles. }
Future work will include verification and validation of the proposed potential game approach in various traffic situations, and development of prediction and learning modules to account for uncertain intentions and/or unknown cost functions of other vehicles.

\bibliography{references}{}

% Generated by IEEEtran.bst, version: 1.14 (2015/08/26)
\begin{thebibliography}{10}
\providecommand{\url}[1]{#1}
\csname url@samestyle\endcsname
\providecommand{\newblock}{\relax}
\providecommand{\bibinfo}[2]{#2}
\providecommand{\BIBentrySTDinterwordspacing}{\spaceskip=0pt\relax}
\providecommand{\BIBentryALTinterwordstretchfactor}{4}
\providecommand{\BIBentryALTinterwordspacing}{\spaceskip=\fontdimen2\font plus
\BIBentryALTinterwordstretchfactor\fontdimen3\font minus
  \fontdimen4\font\relax}
\providecommand{\BIBforeignlanguage}[2]{{%
\expandafter\ifx\csname l@#1\endcsname\relax
\typeout{** WARNING: IEEEtran.bst: No hyphenation pattern has been}%
\typeout{** loaded for the language `#1'. Using the pattern for}%
\typeout{** the default language instead.}%
\else
\language=\csname l@#1\endcsname
\fi
#2}}
\providecommand{\BIBdecl}{\relax}
\BIBdecl

\bibitem{online1}
BuyShares, ``Global autonomous car market to grow by 36$\%$ and hit a $\$37 b$
  value by 2023,''
  https://buyshares.co.uk/global-autonomous-car-market-to-grow-by-36-and-hit-a-37b-value-by-2023/,
  Tech. Rep., 2021.

\bibitem{challenge3}
M.~Campbell, M.~Egerstedt, J.~P. How, and R.~M. Murray, ``Autonomous driving in
  urban environments: approaches, lessons and challenges,'' \emph{Philosophical
  Transactions of the Royal Society A: Mathematical, Physical and Engineering
  Sciences}, vol. 368, no. 1928, pp. 4649--4672, 2010.

\bibitem{challenge1}
D.~Gonz{\'a}lez, J.~P{\'e}rez, V.~Milan{\'e}s, and F.~Nashashibi, ``A review of
  motion planning techniques for automated vehicles,'' \emph{IEEE Transactions
  on Intelligent Transportation Systems}, vol.~17, no.~4, pp. 1135--1145, 2015.

\bibitem{challenge2}
B.~R. Kiran, I.~Sobh, V.~Talpaert, P.~Mannion, A.~A. Al~Sallab, S.~Yogamani,
  and P.~P{\'e}rez, ``Deep reinforcement learning for autonomous driving: A
  survey,'' \emph{IEEE Transactions on Intelligent Transportation Systems},
  2021.

\bibitem{decision_add1}
S.~Noh, ``Decision-making framework for autonomous driving at road
  intersections: Safeguarding against collision, overly conservative behavior,
  and violation vehicles,'' \emph{IEEE Transactions on Industrial Electronics},
  vol.~66, no.~4, pp. 3275--3286, 2018.

\bibitem{decision2}
C.-J. Hoel, K.~Driggs-Campbell, K.~Wolff, L.~Laine, and M.~J. Kochenderfer,
  ``Combining planning and deep reinforcement learning in tactical decision
  making for autonomous driving,'' \emph{IEEE Transactions on Intelligent
  Vehicles}, vol.~5, no.~2, pp. 294--305, 2019.

\bibitem{decision3}
P.~Hang, C.~Lv, Y.~Xing, C.~Huang, and Z.~Hu, ``Human-like decision making for
  autonomous driving: A noncooperative game theoretic approach,'' \emph{IEEE
  Transactions on Intelligent Transportation Systems}, 2020.

\bibitem{interaction_2}
A.~Rasouli and J.~K. Tsotsos, ``Autonomous vehicles that interact with
  pedestrians: A survey of theory and practice,'' \emph{IEEE Transactions on
  Intelligent Transportation Systems}, vol.~21, no.~3, pp. 900--918, 2019.

\bibitem{interaction_1}
L.~Hou, L.~Xin, S.~E. Li, B.~Cheng, and W.~Wang, ``Interactive trajectory
  prediction of surrounding road users for autonomous driving using
  structural-lstm network,'' \emph{IEEE Transactions on Intelligent
  Transportation Systems}, vol.~21, no.~11, pp. 4615--4625, 2019.

\bibitem{interaction_3}
C.~Xu, W.~Zhao, L.~Li, Q.~Chen, D.~Kuang, and J.~Zhou, ``A nash q-learning
  based motion decision algorithm with considering interaction to traffic
  participants,'' \emph{IEEE Transactions on Vehicular Technology}, vol.~69,
  no.~11, pp. 12\,621--12\,634, 2020.

\bibitem{nan_game}
N.~Li, D.~W. Oyler, M.~Zhang, Y.~Yildiz, I.~Kolmanovsky, and A.~R. Girard,
  ``Game theoretic modeling of driver and vehicle interactions for verification
  and validation of autonomous vehicle control systems,'' \emph{IEEE
  Transactions on Control Systems Technology}, vol.~26, no.~5, pp. 1782--1797,
  2017.

\bibitem{game_sta}
Q.~Zhang, R.~Langari, H.~E. Tseng, D.~Filev, S.~Szwabowski, and S.~Coskun, ``A
  game theoretic model predictive controller with aggressiveness estimation for
  mandatory lane change,'' \emph{IEEE Transactions on Intelligent Vehicles},
  vol.~5, no.~1, pp. 75--89, 2019.

\bibitem{Victor}
V.~Lopez, F.~Lewis, M.~Liu, Y.~Wan, S.~Nageshrao, and D.~Filev,
  ``Game-theoretic lane-changing decision making and payoff learning for
  autonomous vehicles,'' \emph{IEEE Transactions on Vehicular Technology},
  2022.

\bibitem{mine_1}
M.~Liu, Y.~Wan, F.~Lewis, S.~Nageshrao, and D.~Filev, ``A three-level
  game-theoretic decision-making framework for autonomous vehicles,''
  \emph{IEEE Transactions on Intelligent Transportation Systems}, 2022.

\bibitem{game_new_1}
F.~Camara, N.~Bellotto, S.~Cosar, F.~Weber, D.~Nathanael, M.~Althoff, J.~Wu,
  J.~Ruenz, A.~Dietrich, G.~Markkula \emph{et~al.}, ``Pedestrian models for
  autonomous driving part ii: high-level models of human behavior,'' \emph{IEEE
  Transactions on Intelligent Transportation Systems}, 2020.

\bibitem{game_merge}
H.~Kita, ``A merging--giveway interaction model of cars in a merging section: a
  game theoretic analysis,'' \emph{Transportation Research Part A: Policy and
  Practice}, vol.~33, no. 3-4, pp. 305--312, 1999.

\bibitem{game_leader}
N.~Li, Y.~Yao, I.~Kolmanovsky, E.~Atkins, and A.~R. Girard, ``Game-theoretic
  modeling of multi-vehicle interactions at uncontrolled intersections,''
  \emph{IEEE Transactions on Intelligent Transportation Systems}, 2020.

\bibitem{suzhou}
Q.~Dai, X.~Xu, W.~Guo, S.~Huang, and D.~Filev, ``Towards a systematic
  computational framework for modeling multi-agent decision-making at micro
  level for smart vehicles in a smart world,'' \emph{Robotics and Autonomous
  Systems}, vol. 144, p. 103859, 2021.

\bibitem{first}
D.~Monderer and L.~S. Shapley, ``Potential games,'' \emph{Games and economic
  behavior}, vol.~14, no.~1, pp. 124--143, 1996.

\bibitem{economics}
D.~Cai, S.~Bose, and A.~Wierman, ``On the role of a market maker in networked
  cournot competition,'' \emph{Mathematics of Operations Research}, vol.~44,
  no.~3, pp. 1122--1144, 2019.

\bibitem{nature}
J.~Wu and D.~Zusai, ``A potential game approach to modelling evolution in a
  connected society,'' \emph{Nature human behaviour}, vol.~3, no.~6, pp.
  604--610, 2019.

\bibitem{wirless}
K.~Yamamoto, ``A comprehensive survey of potential game approaches to wireless
  networks,'' \emph{IEICE Transactions on Communications}, vol.~98, no.~9, pp.
  1804--1823, 2015.

\bibitem{hard}
Y.~Hino, ``An improved algorithm for detecting potential games,''
  \emph{International Journal of Game Theory}, vol.~40, no.~1, pp. 199--205,
  2011.

\bibitem{game_book}
Y.~Shoham and K.~Leyton-Brown, \emph{Multiagent systems: Algorithmic,
  game-theoretic, and logical foundations}.\hskip 1em plus 0.5em minus
  0.4em\relax Cambridge University Press, 2008.

\bibitem{add_PSNE}
H.~Lu, ``On the existence of pure-strategy nash equilibrium,'' \emph{Economics
  Letters}, vol.~94, no.~3, pp. 459--462, 2007.

\bibitem{add_exponential}
S.~Durand and B.~Gaujal, ``Complexity and optimality of the best response
  algorithm in random potential games,'' in \emph{International Symposium on
  Algorithmic Game Theory}, 2016, pp. 40--51.

\bibitem{potential_book}
Q.~D. L{\~a}, Y.~H. Chew, and B.-H. Soong, \emph{Potential Game Theory}.\hskip
  1em plus 0.5em minus 0.4em\relax Springer, 2016.

\bibitem{add_jerk}
M.~Elbanhawi, M.~Simic, and R.~Jazar, ``In the passenger seat: investigating
  ride comfort measures in autonomous cars,'' \emph{IEEE Intelligent
  transportation systems magazine}, vol.~7, no.~3, pp. 4--17, 2015.

\bibitem{add_travelefficiency}
M.~Zhou, Y.~Yu, and X.~Qu, ``Development of an efficient driving strategy for
  connected and automated vehicles at signalized intersections: A reinforcement
  learning approach,'' \emph{IEEE Transactions on Intelligent Transportation
  Systems}, vol.~21, no.~1, pp. 433--443, 2019.

\bibitem{add_fuel}
L.~Zhang, K.~Peng, X.~Zhao, and A.~J. Khattak, ``New fuel consumption model
  considering vehicular speed, acceleration, and jerk,'' \emph{Journal of
  Intelligent Transportation Systems}, pp. 1--13, 2021.

\bibitem{payoff1}
Q.~Dai, D.~Shen, J.~Wang, S.~Huang, and D.~Filev, ``Calibration of human
  driving behavior and preference using naturalistic traffic data,''
  \emph{arXiv preprint arXiv:2105.01820}, 2021.

\bibitem{payoff2}
C.~Hubmann, M.~Becker, D.~Althoff, D.~Lenz, and C.~Stiller, ``Decision making
  for autonomous driving considering interaction and uncertain prediction of
  surrounding vehicles,'' in \emph{IEEE Intelligent Vehicles Symposium}, 2017,
  pp. 1671--1678.

\bibitem{payoff3}
J.~F. Fisac, E.~Bronstein, E.~Stefansson, D.~Sadigh, S.~S. Sastry, and A.~D.
  Dragan, ``Hierarchical game-theoretic planning for autonomous vehicles,'' in
  \emph{International Conference on Robotics and Automation (ICRA)}, 2019, pp.
  9590--9596.

\bibitem{dynamics}
R.~Tian, N.~Li, I.~Kolmanovsky, Y.~Yildiz, and A.~R. Girard, ``Game-theoretic
  modeling of traffic in unsignalized intersection network for autonomous
  vehicle control verification and validation,'' \emph{IEEE Transactions on
  Intelligent Transportation Systems}, 2020.

\bibitem{bicycle_2}
J.~Kong, M.~Pfeiffer, G.~Schildbach, and F.~Borrelli, ``Kinematic and dynamic
  vehicle models for autonomous driving control design,'' in \emph{IEEE
  intelligent vehicles symposium (IV)}, 2015, pp. 1094--1099.

\bibitem{bicycle_validation}
P.~Polack, F.~Altch{\'e}, B.~d'Andr{\'e}a Novel, and A.~de~La~Fortelle, ``The
  kinematic bicycle model: A consistent model for planning feasible
  trajectories for autonomous vehicles?'' in \emph{IEEE intelligent vehicles
  symposium (IV)}, 2017, pp. 812--818.

\bibitem{add_T}
L.~Yang, X.~Li, W.~Guan, H.~M. Zhang, and L.~Fan, ``Effect of traffic density
  on drivers’ lane change and overtaking maneuvers in freeway situation—a
  driving simulator--based study,'' \emph{Traffic injury prevention}, vol.~19,
  no.~6, pp. 594--600, 2018.

\bibitem{add_deltaT}
X.~Cao, W.~Young, and M.~Sarvi, ``Exploring duration of lane change
  execution,'' in \emph{Australasian Transport Research Forum}, 2013.

\bibitem{add_collect}
J.~Xiao, Z.~Xiao, D.~Wang, V.~Havyarimana, C.~Liu, C.~Zou, and D.~Wu, ``Vehicle
  trajectory interpolation based on ensemble transfer regression,'' \emph{IEEE
  Transactions on Intelligent Transportation Systems}, 2021.

\bibitem{inverse_RL}
J.~Liu, L.~N. Boyle, and A.~Banerjee, ``An inverse reinforcement learning
  approach for customizing automated lane change systems,'' \emph{IEEE
  Transactions on Vehicular Technology}, 2022.

\bibitem{ga}
MathWorks, ``Find minimum of function using genetic algorithm,''
  \url{https://www.mathworks.com/help/gads/ga.html}.

\bibitem{add_ga}
D.~I. Arkhipov, D.~Wu, T.~Wu, and A.~C. Regan, ``A parallel genetic algorithm
  framework for transportation planning and logistics management,'' \emph{IEEE
  Access}, vol.~8, pp. 106\,506--106\,515, 2020.

\bibitem{add_DQN_2}
H.~Van~Hasselt, A.~Guez, and D.~Silver, ``Deep reinforcement learning with
  double q-learning,'' in \emph{Proceedings of the AAAI conference on
  artificial intelligence}, vol.~30, no.~1, 2016.

\bibitem{add_DQN_1}
S.~Nageshrao, H.~E. Tseng, and D.~Filev, ``Autonomous highway driving using
  deep reinforcement learning,'' in \emph{2019 IEEE International Conference on
  Systems, Man and Cybernetics (SMC)}, 2019, pp. 2326--2331.

\bibitem{IDM}
M.~Treiber, A.~Hennecke, and D.~Helbing, ``Congested traffic states in
  empirical observations and microscopic simulations,'' \emph{Physical review
  E}, vol.~62, no.~2, p. 1805, 2000.

\bibitem{pcca}
M.~Santillo and M.~Jankovic, ``Collision free navigation with interacting,
  non-communicating obstacles,'' in \emph{American Control Conference (ACC)},
  2021, pp. 1637--1643.

\bibitem{CBF}
A.~D. Ames, X.~Xu, J.~W. Grizzle, and P.~Tabuada, ``Control barrier function
  based quadratic programs for safety critical systems,'' \emph{IEEE
  Transactions on Automatic Control}, vol.~62, no.~8, pp. 3861--3876, 2016.

\bibitem{gradient_theorem}
R.~Williamson and H.~Trotter, \emph{Multivariable Mathematics}.\hskip 1em plus
  0.5em minus 0.4em\relax Pearson Education, 2004.

\end{thebibliography}
\bibliographystyle{IEEEtran}

\section*{Appendix}
\subsection{Proof of Proposition \ref{p1}}
\emph{Necessity}. We first prove that if \eqref{Potential_def} holds, then \eqref{defi_c} holds.   Let $\mathbf{r}$ be a unit vector in $\mathbb{R}^{m_i}$. If \eqref{Potential_def} holds, then the following equation holds,
\begin{equation}\label{23}
\begin{split}
   & \lim_{h\to 0}\frac{V_i^t(\mathbf{u}_i+h\mathbf{r},\mathbf{u}_{-i})-V_i^t(\mathbf{u}_i,\mathbf{u}_{-i})}{h}\\
   &=\lim\limits_{h\to 0}\frac{F^t(\mathbf{u}_i+h\mathbf{r},\mathbf{u}_{-i})-F^t(\mathbf{u}_i,\mathbf{u}_{-i})}{h}, \quad \forall\mathbf{r}. 
    \end{split}
\end{equation}
 Because $\lim\limits_{h\to 0}\frac{V_i^t(\mathbf{u}_i+h\mathbf{r},\mathbf{u}_{-i})-V_i^t(\mathbf{u}_i,\mathbf{u}_{-i})}{h}= \frac{\partial V_i^t(\mathbf{u}_i,\mathbf{u}_{-i})}{\partial \mathbf{u}_i}\cdot \mathbf{r}$ and  $\lim\limits_{h\to 0}\frac{F^t(\mathbf{u}_i+h\mathbf{r},\mathbf{u}_{-i})-F^t(\mathbf{u}_i,\mathbf{u}_{-i})}{h}= \frac{\partial F^t(\mathbf{u}_i,\mathbf{u}_{-i})}{\partial \mathbf{u}_i}\cdot \mathbf{r}$, Equation \eqref{23} leads to \eqref{defi_c}.

\emph{Sufficiency}. Now we prove that if \eqref{defi_c} holds, then \eqref{Potential_def} holds. Because $\mathcal{S}_i$ is a connected set, for any given $\mathbf{u}_i\in \mathcal{S}_i$ and $\mathbf{u}'_i\in \mathcal{S}_i$, there must exist a continuous curve ${\gamma}$ that starts at $\mathbf{u}_i$ and ends at $\mathbf{u}'_i$. As such,  \eqref{defi_c} leads to
\begin{equation}\label{21}
    \int_{\gamma} \frac{\partial V_i^t(\Tilde{\mathbf{u}}_i,\mathbf{u}_{-i})}{\partial \Tilde{\mathbf{u}}_i}d\Tilde{\mathbf{u}}_i=\int_{\gamma} \frac{\partial F^t(\Tilde{\mathbf{u}}_i,\mathbf{u}_{-i})}{\partial \Tilde{\mathbf{u}}_i}d\Tilde{\mathbf{u}}_i,
\end{equation}
where $\int_{\gamma}$ represents the line integral along the curve $\gamma$.
According to the gradient theorem for line integrals \cite{gradient_theorem},  $\int_{\gamma} \frac{\partial V_i^t(\Tilde{\mathbf{u}}_i,\mathbf{u}_{-i})}{\partial \Tilde{\mathbf{u}}_i}d\Tilde{\mathbf{u}}_i=V_i^t(\mathbf{u}_i',\mathbf{u}_{-i})-V_i^t(\mathbf{u}_i,\mathbf{u}_{-i})$, and $\int_{\gamma} \frac{\partial F^t(\Tilde{\mathbf{u}}_i,\mathbf{u}_{-i})}{\partial \Tilde{\mathbf{u}}_i}d\Tilde{\mathbf{u}}_i=F^t(\mathbf{u}_i',\mathbf{u}_{-i})-F^t(\mathbf{u}_i,\mathbf{u}_{-i})$. As such, Equation \eqref{21} leads to \eqref{Potential_def}.

\subsection{Proof of Lemma \ref{l4}}
{Let $(\mathbf{u}^*_i,\mathbf{u}^*_{-i})$ be a NE of the game \eqref{value}, i.e., $\mathbf{u}^*_i\in\argmin V_i^t(\mathbf{u}_i,\mathbf{u}^*_{-i})$  $\forall i\in\mathcal{N}$. From \eqref{defi_c}, it follows that $V_i^t(\mathbf{u}_i,\mathbf{u}_{-i})-F^t(\mathbf{u}_i,\mathbf{u}_{-i})=C(\mathbf{u}_{-i})$ for some function $C(\mathbf{u}_{-i})$, and hence  $\mathbf{u}^*_i\in\argmin F^t(\mathbf{u}_i,\mathbf{u}^*_{-i})$    $\forall i\in\mathcal{N}$, i.e., $(\mathbf{u}^*_i,\mathbf{u}^*_{-i})$ is a also NE for the identical-interest game with all agents' cumulative costs equal to $F^t$. On the other hand, if $(\mathbf{u}^*_i,\mathbf{u}^*_{-i})$ is a NE for the identical-interest game, then similar arguments imply that it is also a NE of the game \eqref{value}.}

\subsection{Proof of Lemma \ref{l5}}
For continuous potential games with compact $\mathcal{S}$ {and continuous $F^t$, a minimum of $F^t$ as a function of $\mathbf{u}$ always exists, according to Weierstrass theorem \cite{gradient_theorem}}. Hence, a NE exists  according to {Lemma \ref{l4}}. If $F^t$ is strictly convex, then it has a unique minimum, and as such, the game has a unique NE.

\begin{IEEEbiography}[{\includegraphics[width=1in,height=1.25in,clip,keepaspectratio]{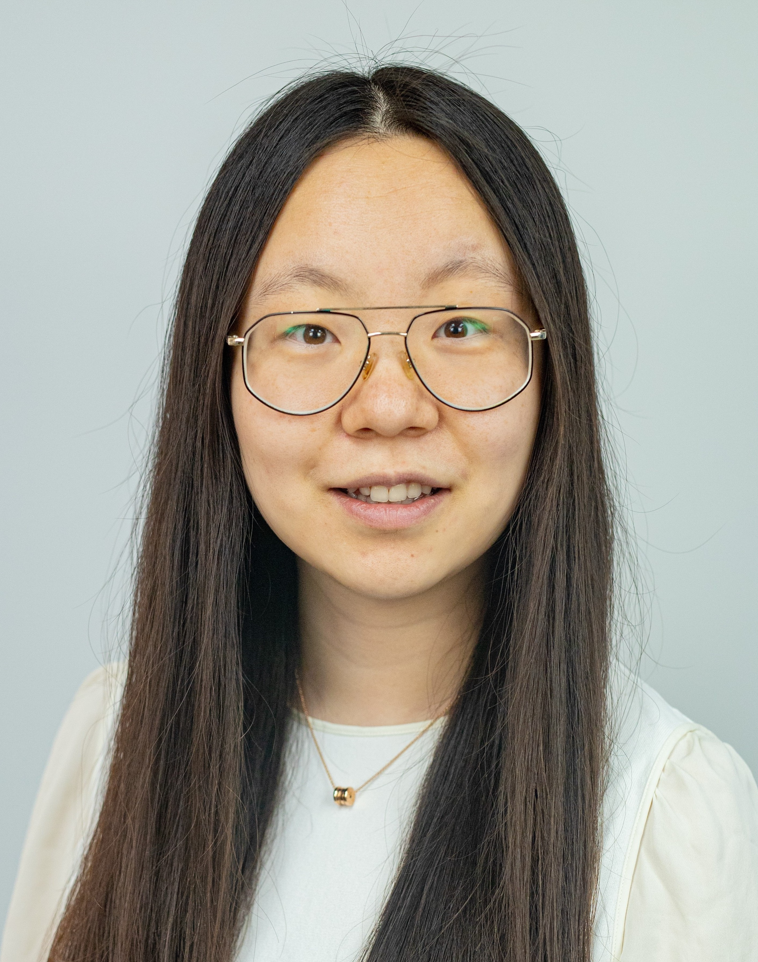}}]{Mushuang Liu} is an Assistant Professor in the Department of Mechanical and Aerospace Engineering at the University of Missouri, Columbia, MO. She worked as a Postdoc in the Department of Aerospace Engineering at the University of Michigan, Ann Arbor, MI, 2021-2022. She received her Ph.D degree from the University of Texas at Arlington in 2020 and her B.S. degree from the University of Electronic Science and Technology of China in 2016. Her research lies in  control and learning for multi-agent systems.
\end{IEEEbiography}

\begin{IEEEbiography}[{\includegraphics[width=1in,height=1.25in,clip,keepaspectratio]{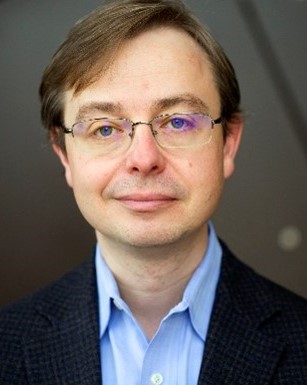}}]{Ilya Kolmanovsky}  is a professor in the department of aerospace engineering at the University of Michigan, Ann Arbor, MI, USA, with research interests in control theory for systems with state and control constraints, and in control applications to aerospace and automotive systems. He received his Ph.D. degree in aerospace engineering from the University of Michigan in 1995.  Prior to joining the University of Michigan as a faculty in 2010, Kolmanovsky was with Ford Research and Advanced Engineering in Dearborn, Michigan for close to 15 years. He is a Fellow of IEEE, IFAC and NAI, and a Senior Editor of IEEE Transactions on Control Systems Technology.
\end{IEEEbiography}

\begin{IEEEbiography}[{\includegraphics[width=1in,height=1.25in,clip,keepaspectratio]{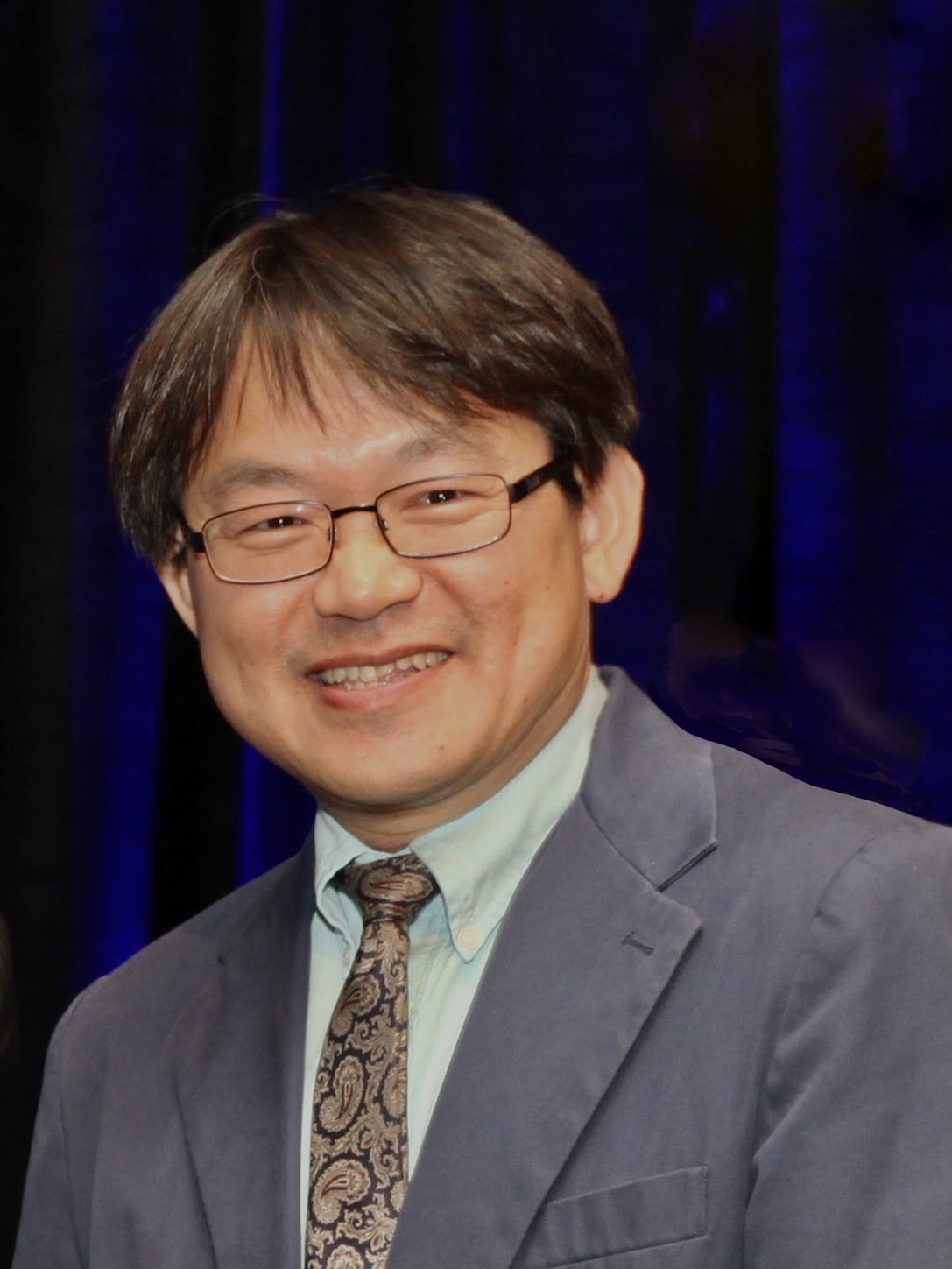}}]{H. Eric Tseng}  received the B.S. degree from the National Taiwan University, Taipei, Taiwan, in 1986, and the M.S. and Ph.D. degrees in mechanical engineering from the University of California at Berkeley, Berkeley, in 1991 and 1994, respectively. In 1994, he joined Ford Motor Company. At Ford, he is currently a Senior Technical Leader of Controls and Automated Systems in Research and Advanced Engineering. Many of his contributed technologies led to production vehicles implementation. His technical achievements have been recognized internally seven times with Ford’s highest technical award—the Henry Ford Technology Award, as well as externally by the American Automatic Control Council with Control Engineering Practice Award in 2013. He has over 100 U.S. patents and over 120 publications. He is an NAE Member.
\end{IEEEbiography}

\begin{IEEEbiography}[{\includegraphics[width=1in,height=1.25in,clip,keepaspectratio]{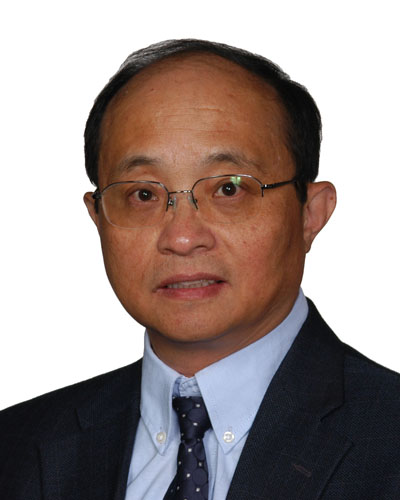}}]{Suzhou Huang}  is an executive technical leader in Research and Advanced Engineering at Ford Motor Company. His research interests now mostly concentrate on autonomous and connected vehicle technologies, human driving behavior modeling, and smart transportation systems using methodologies of ML/AI and game theory. Prior to his current position, he was director, Analytics R$\&$D in Global Data Insights and Analytics at Ford Motor and at Ford Credit, developing advanced analytics models in a variety of areas, such as production planning, inventory management, pricing, marketing program evaluation, auto financing, risk management, capital allocation, and regulation compliance. Suzhou Huang earned a Ph.D. in theoretical physics from MIT. Since he joined Ford in 1998 he transformed himself into an applied micro-economist/econometrician.
\end{IEEEbiography}

\begin{IEEEbiography}[{\includegraphics[width=1in,height=1.5in,clip,keepaspectratio]{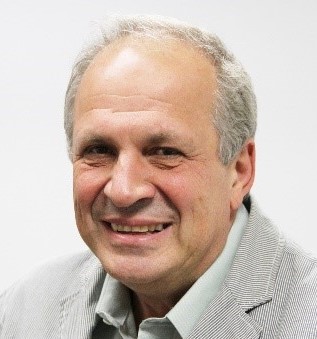}}]{Dimitar Filev} is Senior Henry Ford Technical Fellow in Control and AI with Research $\&$ Advanced Engineering – Ford Motor Company. His research is in computational intelligence, AI and intelligent control, and their applications to autonomous driving, vehicle systems, and automotive engineering.  He holds over 100 granted US patents and has been awarded with the IEEE SMCS 2008 Norbert Wiener Award and the 2015 Computational Intelligence Pioneer’s Award. Dr. Filev is a Fellow of the IEEE and a member of the National Academy of Engineering. He was President of the IEEE Systems, Man, $\&$ Cybernetics Society (2016-2017).
\end{IEEEbiography}

\begin{IEEEbiography}[{\includegraphics[width=1in,height=1.5in,clip,keepaspectratio]{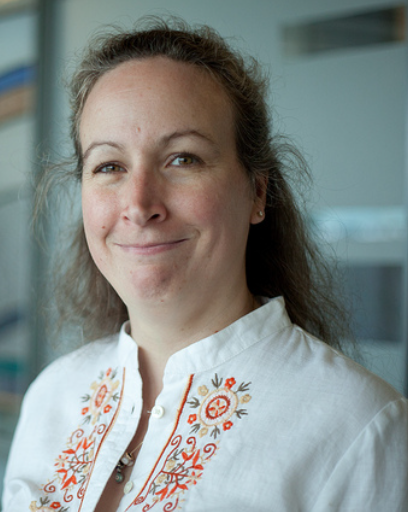}}]{Anouck Girard} received
the Ph.D. degree in ocean engineering from the
University of California at Berkeley, Berkeley, CA, USA, in 2002. She has been with the University of Michigan, Ann Arbor, MI, USA, since 2006, where she is currently a Professor of aerospace engineering. She has coauthored the book Fundamentals of Aerospace Navigation and Guidance (Cambridge University Press, 2014). Her current research interests include vehicle dynamics and control systems. She was a recipient of the Silver Shaft Teaching Award from the University of Michigan and the Best Student Paper Award from the American Society of Mechanical Engineers.
\end{IEEEbiography}

\end{document}